\pdfoutput=1
\documentclass[sigconf,nonacm]{acmart}
\graphicspath{{./}}
\usepackage{bm}
\usepackage{cleveref}
\usepackage{xspace}
\usepackage{float}
\usepackage{placeins}
\usepackage{enumitem}
\usepackage{booktabs}
\usepackage{tikz}
\usetikzlibrary{arrows.meta,patterns}
\definecolor{figblue}{HTML}{0072B2}
\definecolor{figorange}{HTML}{D55E00}
\definecolor{figgray}{HTML}{7F7F7F}
\newcommand{\ncalPosMinAlive}{1}
\newcommand{\ncalPosMedAlive}{136}

\newcommand{\nnDeadCalAlsoTest}{3}
\newcommand{\nnTest}{2,538,132}
\newcommand{\nnTrain}{8,248,933}
\newcommand{\nnLabels}{24}
\newcommand{\nnVal}{761,439}
\newcommand{\nwMax}{8,248,932}
\newcommand{\nwMin}{78}
\newcommand{\npopMAP}{0.535}
\newcommand{\nSraw}{0.117}
\newcommand{\nSplIsoId}{0.465}
\newcommand{\nSplIsoPrior}{0.784}

\newcommand{\nSpooled}{0.261}
\newcommand{\nSceil}{0.786}
\newcommand{\nSauc}{0.793}
\newcommand{\nSsatOne}{16.6}

\newcommand{\nSnDead}{3}
\newcommand{\nWraw}{0.776}
\newcommand{\nWplIsoId}{0.434}
\newcommand{\nWplIsoPrior}{0.828}

\newcommand{\nNraw}{0.808}

\newcommand{\nNplIsoPrior}{0.819}

\newcommand{\nNauc}{0.824}

\newcommand{\nSelkan}{0.151}
\newcommand{\nSloss}{0.690}
\newcommand{\ntieShare}{3}
\newcommand{\nrepairShare}{97}

\newcommand{\nelkanShare}{5}
\newcommand{\nSceilGap}{0.021}
\newcommand{\nSplIsoPriorCeilGap}{0.003}
\newcommand{\nnCappedShift}{19}
\newcommand{\nSplOffsetPrior}{0.738}

\newcommand{\noffsetShare}{90}

\newcommand{\nWdeadAllRowsPct}{100}
\newcommand{\nWdeadRawMean}{0.120}
\newcommand{\nWaliveCalMax}{0.0106}

\newcommand{\nfracRowsWithPos}{2.9}
\newcommand{\nnEvalRows}{1,776,693}

\newcommand{\nSsqrtRaw}{0.688}
\newcommand{\nStenrRaw}{0.036}
\newcommand{\nSsqrtPlIsoPrior}{0.806}
\newcommand{\nStenrPlIsoPrior}{0.772}

\newcommand{\nwhatifIdeal}{0.646}

\newcommand{\noddsShare}{23}

\newcommand{\noverlapWhatif}{3.11}
\newcommand{\noverlapS}{2.57}

\newcommand{\nbootB}{2000}
\newcommand{\nbootN}{51,403}

\newcommand{\nciWprior}{+0.052~[+0.050,\,+0.054]}
\newcommand{\nciNprior}{+0.012~[+0.011,\,+0.013]}
\newcommand{\nciNminusS}{+0.024~[+0.022,\,+0.026]}

\newcommand{\nnMulan}{11}
\newcommand{\nnLearners}{5}

\newcommand{\nnMulanMaxrMax}{3149}
\newcommand{\nenronStrongRawWone}{0.653}

\newcommand{\nenronStrongInvWr}{0.656}

\newcommand{\nmedStrongDistinctWone}{24}
\newcommand{\nmedStrongDistinctWtenr}{2.7}

\newcommand{\nmedStrongSatWtenr}{0.00}
\newcommand{\nnMulanModest}{4}
\newcommand{\nmodestDefaultWorst}{-0.041}
\newcommand{\nmodestDefaultBest}{-0.001}
\newcommand{\nmodestAnyWorst}{-0.146}

\newcommand{\nmodestAnyBest}{+0.010}

\newcommand{\nbirdsRawWone}{0.593}
\newcommand{\nbirdsPlIsoPriorWone}{0.549}

\newcommand{\nmedRawWone}{0.683}
\newcommand{\nmedPlIsoPriorWone}{0.651}

\newcommand{\nnHurtDefault}{3}

\newcommand{\ncalPosMedBirds}{5}
\newcommand{\ncalPosMedGenbase}{5}
\newcommand{\ncalPosMedMedical}{1}
\newcommand{\ncalPosMedEnron}{6}

\newcommand{\nnWhatifClose}{8}

\newcommand{\ngenbaseDefaultWhatifWr}{0.927}
\newcommand{\ngenbaseDefaultRawWr}{0.868}
\newcommand{\nnDoseCells}{55}
\newcommand{\nnCollapseCells}{8}
\newcommand{\nnCollapseCellsAtPointThree}{4}
\newcommand{\nnCollapseCellsAtPointSeven}{10}
\newcommand{\nnCollapseCellsAtPointNine}{16}
\newcommand{\ncollapseDatasets}{corel5k, delicious, genbase}
\newcommand{\nnDelCollapseLearners}{5}
\newcommand{\nnCorelCollapseLearners}{2}
\newcommand{\nnOtherCollapseCells}{1}
\newcommand{\ncorelL}{374}

\newcommand{\ncorelNtestDose}{500}

\newcommand{\ncorelDeadCalDose}{39}
\newcommand{\ncorelRawWone}{0.236}
\newcommand{\ncorelRawWr}{0.000}

\newcommand{\ncorelPlIsoPriorWr}{0.244}
\newcommand{\ncorelPlIsoIdWr}{0.060}
\newcommand{\ncorelPooledWr}{0.125}
\newcommand{\ncorelInvWr}{0.001}

\newcommand{\ndelL}{983}

\newcommand{\ndelNtestDose}{3,185}

\newcommand{\ndelRawWone}{0.419}
\newcommand{\ndelRawWr}{0.001}

\newcommand{\ndelPlIsoPriorWr}{0.434}

\newcommand{\ndelPooledWr}{0.025}
\newcommand{\ndelInvWr}{0.003}

\newcommand{\ndelAucWone}{0.752}
\newcommand{\ndelAucWr}{0.705}

\newcommand{\nenronMaxr}{786}
\newcommand{\nbibtexL}{159}

\newcommand{\nmaxRepairGapCollapse}{0.028}
\newcommand{\nminPooledGapCollapse}{0.053}
\newcommand{\nnRepairCollapseCells}{10}

\newcommand{\ncorelIdGapMin}{0.000}

\newcommand{\ntauList}{0, 1, 2, 5, 10}
\newcommand{\ntauFolds}{5}

\newcommand{\ndelCiTauSel}{+0.016~[+0.012,\,+0.020]}

\newcommand{\nnHurtTauZeroCI}{2}
\newcommand{\nnHurtAfterTau}{0}

\newcommand{\ntauWorstSelMinusRaw}{-0.134}

\newcommand{\nnTauCells}{110}

\newcommand{\nnTauChoicePerLabel}{45}
\newcommand{\ntauMax}{10}
\newcommand{\nnTauChoiceMaxTau}{23}

\newcommand{\nnSharedBelowRaw}{9}

\newcommand{\nnTauSelBelowRaw}{6}
\newcommand{\nnTauZeroBelowRaw}{29}

\newcommand{\ndepSraw}{0.117}

\newcommand{\ndepSplIsoPrior}{0.780}

\newcommand{\nnValRows}{1,903,599}

\newcommand{\nprotoMaxDiff}{0.004}

\newcommand{\nnSurveyImpl}{17}
\newcommand{\nInLabels}{4,000}

\newcommand{\nInVal}{39,362}
\newcommand{\nIwMax}{2,301}
\newcommand{\nIwMin}{6}
\newcommand{\nIlnwMin}{1.8}
\newcommand{\nIlnwMax}{7.7}
\newcommand{\nIpopMAP}{0.056}

\newcommand{\nIcalPosMedAlive}{37}

\newcommand{\nISraw}{0.001}

\newcommand{\nISplIsoPrior}{0.225}

\newcommand{\nISceil}{0.230}

\newcommand{\nISnDead}{0}

\newcommand{\nINraw}{0.131}

\newcommand{\nINplIsoPrior}{0.250}

\newcommand{\nINceil}{0.269}

\newcommand{\nISloss}{0.129}

\newcommand{\nIoddsShare}{32}

\newcommand{\nIoverlapS}{0.02}
\newcommand{\nIfracRowsChangedWhatif}{79}

\newcommand{\nIciSpriorMinusElkan}{+0.208~[+0.206,\,+0.210]}

\newcommand{\nIprotoMaxDiffPair}{0.007}

\newcommand{\nMSplIsoPrior}{0.549}

\newcommand{\nMSceil}{0.548}

\newcommand{\nMSsatOne}{23.3}

\newcommand{\nMNraw}{0.766}

\newcommand{\nMtieShare}{33}
\newcommand{\nMelkanShare}{21}

\newcommand{\nMwhatifIdeal}{0.127}

\newcommand{\nModdsShare}{98}

\newcommand{\nIMSraw}{0.008}

\newcommand{\nIMSplIsoPrior}{0.081}

\newcommand{\nIMSauc}{0.836}

\newcommand{\nIMNraw}{0.056}

\newcommand{\nIMNauc}{0.424}

\newcommand{\nIMSelkan}{0.070}

\newcommand{\ncapNoneSatOne}{16.6}

\newcommand{\ncapLnwMax}{15.9}

\newcommand{\nSmdsPointThreeraw}{0.634}

\newcommand{\nSmdsPointThreeplIsoPrior}{0.556}
\newcommand{\nSmdsPointThreebPriorMax}{0.90}

\newcommand{\nSmdsPointThreeTEtaC}{0.9}
\newcommand{\nSmdsPointThreeoverMedian}{7.0}

\newcommand{\nSmdsFiveraw}{0.718}

\newcommand{\nSmdsFiveelkan}{0.806}
\newcommand{\nSmdsFiveplIsoPrior}{0.829}
\newcommand{\nSmdsFivebPriorMax}{8.90}

\newcommand{\nSmdsFiveTEtaC}{15.0}
\newcommand{\nSmdsFiveoverMedian}{1.1}

\newcommand{\nSmdsFiveNraw}{0.823}

\newcommand{\nSmdsFiveNplIsoPrior}{0.829}

\newcommand{\ncapN}{5}
\newcommand{\ncapNwithinBound}{5}

\newcommand{\ncapT}{60}
\newcommand{\ncapEta}{0.05}

\newcommand{\nSSretrainRaw}{0.111}

\newcommand{\nSNretrainRaw}{0.808}

\newcommand{\ncapTwinGapMax}{+0.198}
\newcommand{\ncapNtwinsAhead}{4}

\newcommand{\ntieShareUnw}{5}
\newcommand{\ntieShareOracle}{8}

\newcommand{\nseedN}{3}
\newcommand{\nseedSraw}{0.092 \pm 0.052}

\newcommand{\nseedSplIsoPrior}{0.788 \pm 0.003}

\newcommand{\nseedOddsShare}{20.3 \pm 2.3}
\newcommand{\nIcapNoneSatOne}{19.4}

\newcommand{\nIcapLnwMax}{7.7}

\newcommand{\nIStieCellsExactOne}{19.4}

\newcommand{\nIStieCellsPerRowMean}{774.27}

\newcommand{\nItieShareUnw}{54}

\newcommand{\nIMAnLabels}{49,688}

\newcommand{\nIMApopMAP}{0.051}

\newcommand{\nIMASplIsoPrior}{0.073}
\newcommand{\nIMASplIsoId}{0.001}

\newcommand{\nIMASnDead}{21,136}

\newcommand{\nSbOverLnwMedian}{0.98}
\newcommand{\nSnIdentShift}{5}

\newcommand{\nSbDirectOverLnwMedian}{0.78}

\newcommand{\nSmdsPointThreeCeil}{0.555}

\newcommand{\nnCapsIsoBest}{3}
\newcommand{\nnCapsRungs}{5}

\newcommand{\nMSnIdentShift}{1}

\newcommand{\nISbOverLnwMedian}{0.88}
\newcommand{\nISnIdentShift}{394}

\newcommand{\nISbDirectOverLnwMedian}{0.50}

\newcommand{\nInCapsIsoBest}{5}

\newcommand{\nIMSbOverLnwMedian}{1.06}
\newcommand{\nIMSnIdentShift}{3644}

\newcommand{\nsatMarginF}{17.3}

\newtheorem{proposition}{Proposition}
\newtheorem{lemma}{Lemma}
\newtheorem{corollary}{Corollary}
\newtheorem{remark}{Remark}
\crefname{proposition}{Proposition}{Propositions}
\Crefname{proposition}{Proposition}{Propositions}
\crefname{lemma}{Lemma}{Lemmas}
\Crefname{lemma}{Lemma}{Lemmas}
\crefname{corollary}{Corollary}{Corollaries}
\Crefname{corollary}{Corollary}{Corollaries}
\crefname{remark}{Remark}{Remarks}
\Crefname{remark}{Remark}{Remarks}
\crefname{section}{Section}{Sections}
\Crefname{section}{Section}{Sections}
\crefname{table}{Table}{Tables}
\Crefname{table}{Table}{Tables}
\crefname{figure}{Figure}{Figures}
\Crefname{figure}{Figure}{Figures}
\crefname{appendix}{Appendix}{Appendices}
\Crefname{appendix}{Appendix}{Appendices}

\newcommand{\Sm}{\textsf{LGBM-w}\xspace}
\newcommand{\Wm}{\textsf{WGB}\xspace}
\newcommand{\Nm}{\textsf{LGBM-0}\xspace}
\newcommand{\mapk}{\mathrm{MAP}@K\xspace}
\newcommand{\mapseven}{\mathrm{MAP}@7\xspace}
\newcommand{\Supp}[1]{Supplement~#1}  

\setcopyright{none}
\renewcommand\footnotetextcopyrightpermission[1]{}
\begin{document}

\title{Odds-Shift Slippage in One-vs-Rest Rankers: Diagnosing and Repairing Reweighting-Induced Top-$K$ Errors}

\author{Akifumi Goto}
\orcid{0009-0004-8305-6436}
\email{s6025131@st.shiga-u.ac.jp}
\affiliation{%
  \institution{Graduate School of Data Science, Shiga University}
  \city{Hikone}
  \state{Shiga}
  \country{Japan}
}

\begin{abstract}
One-vs-rest rankers that show each user the top-$K$ of many rare labels usually counter imbalance with a per-label positive-class weight, \texttt{scale\_pos\_weight}$=n_-/n_+$. Elkan's identity says such a weight shifts label $j$'s log-odds by $\ln w_j$, so the model ranks by weighted odds rather than by the marginal that is Bayes-optimal for precision@$K$, and it suggests inverting the shift afterwards; what a finite learner does with a weight in the thousands, and which repair then works, has not been measured. We call the gap between the promised shift and the realized one \emph{odds-shift slippage} and measure it on matched pairs of LightGBM and MLP models that differ only in the weights. On Santander the weight takes $\mapseven$ from $\nNraw$ to $\nSraw$; for the boosted pairs the ideal odds shift accounts for \noddsShare\% of that loss (\nIoddsShare\% on Instacart; \nModdsShare\% for an MLP pair on the same rows) and slippage for the rest. We prove that a booster whose leaf steps are capped at $c$ realizes at most $T\eta c$ nat of shift in $T$ rounds at rate $\eta$, which a cap sweep confirms, and show that without a cap saturated cells tie at exactly $1.0$, beyond the reach of any separable map. The analytic inversion therefore pays only where the shift was realized and nothing saturated, whereas per-label isotonic regression returns the Santander model to $\nSplIsoPrior$ ($\ndepSplIsoPrior$ with the calibrator fitted on the validation period) --- but only if labels without calibration positives are mapped to their prior rather than passed through. On \nnMulan{} public MULAN benchmarks and \nnLearners{} learners the weighted model loses more than half of its $\mapk$ in \nnCollapseCells{} of \nnDoseCells{} cells, and on delicious and Corel5k the same repair returns it to the unweighted level; per-label calibration hurts where positives are scarce, a harm that a cross-validated rule removes for the default learner. The recipe --- estimate marginals unweighted, calibrate per label with an explicit prior fallback, choose the calibrator on the calibration split, then decide at top-$K$ --- is released as \texttt{oddslip}.
\end{abstract}

\maketitle

\section{Introduction}
\label{sec:intro}

Ranking many binary labels within a row is the operational core of product recommendation, tagging and risk triage: for each customer the system scores every product, sorts, and shows the top $K$~\cite{Jain2016XMLLoss,Menon2019Reductions}. The labels are almost always imbalanced, and the most common remedy in gradient-boosting practice is a per-label positive-class weight, \texttt{scale\_pos\_weight} in XGBoost and LightGBM~\cite{Chen2016XGBoost,Ke2017LightGBM}, for which the XGBoost documentation suggests $n_-/n_+$~\cite{XGBoostDocs}; neural multi-label models use the same recipe as \texttt{pos\_weight}.

Label-specific weights are not a free choice when the deliverable is a per-row ranking. Ranking labels by their marginal $p_{ij}=\Pr(y_{ij}=1\mid x_i)$ is Bayes-optimal for precision@$K$, and one-vs-rest training with a proper loss is consistent for it~\cite{Menon2019Reductions,Wydmuch2018PLT}. A positive-class weight $w$ moves the population minimizer of a proper loss from $p$ to $wp/(wp+1-p)$, an odds shift by $w$~\cite{Elkan2001,Saerens2002}. Together the two facts say that a label-specific $w_j$ makes the model rank by $w_j\cdot\mathrm{odds}(p_{ij})$, a different ranking whenever the weights differ.

What the classical account leaves open is what a finite learner does when $w_j$ is in the thousands, as the recommended value is for rare labels, and which post-hoc repair then works. Independent per-class calibration maps are known to lose top-$k$ accuracy in multi-class settings~\cite{Patel2021IMax}, and the extreme multi-label literature deliberately uses one shared monotone map, deferring per-label calibration to future work~\cite{Ullah2025XMLCCalib}. Neither measures the cross-label consequence of training weights on a top-$K$ benchmark, nor what a calibrator does with labels that have no calibration positives.

\Cref{fig:one} gives our answer on two product-recommendation benchmarks, Santander and Instacart, each on a pair of LightGBM models that differ only in the weights. Applying the ideal odds shift to the unweighted model reproduces only a minority of the loss the trained weighted model actually suffers. We call the remainder \emph{odds-shift slippage}: the shift the learner realizes is not the shift the weight asked for. \Cref{fig:schematic} sketches the two regimes that produce it. A leaf-step cap keeps the realized shift below $\ln w_j$ (\cref{prop:budget}); without a cap the learner realizes the shift on the labels where it is still defined, while elsewhere cells saturate into ties that no per-label map can reorder (\cref{lem:leaf}). Among the configurations we evaluated, analytic inversion improves only those that combine a near-intended realized shift with no observed saturation (\cref{fig:regime}); per-label calibration repairs the ordered part, provided that a label without calibration positives is mapped to its prior rather than passed through (\cref{prop:dead}).

\paragraph{Contributions}
\begin{enumerate}[leftmargin=*,nosep]
\item \textbf{Odds-shift slippage, decomposed.} Using Elkan's identity as a what-if device, we split the top-$K$ loss of a weighted one-vs-rest ranker into the ideal odds shift and the loss beyond it, the slippage, which itself splits into the component that a per-label monotone map fitted on calibration data recovers and a residue above the in-sample per-label ceiling (\cref{fig:one}A, \cref{tab:ladder}). For the boosted pairs the shift accounts for \noddsShare\% of the observed loss on Santander ($\nseedOddsShare\%$ over three train-row draws) and \nIoddsShare\% on Instacart; the share is a property of the pair, since an MLP pair on the same Santander rows gives up \nModdsShare\% of its loss to the same ideal shift. Logit adjustment and residual reranking~\cite{Menon2021LogitAdjust,Wang2026BeyondLA} correct a trained score; neither measures how far a training weight's intended shift was realized, which is what the decomposition isolates.
\item \textbf{A step budget that makes the account falsifiable.} \Cref{prop:budget} bounds the realized per-label shift of a booster with leaf-step cap $c$ by $T\eta c$ whatever the weight, and \cref{cor:exact} says when the analytic inversion is exact. A sweep of \ncapN{} caps on both benchmarks keeps every realized maximum inside its budget with no saturation (\cref{fig:one}B); without a cap the shift is identifiable only on labels with no saturated cell, where its median is close to $\ln w_j$. Among the evaluated models the inversion pays only for the two that realized most of the shift and saturated nothing (\cref{fig:regime}), an association rather than a proven condition; and an unweighted twin trained with the same cap matches the calibrated capped weighted arm, so on Santander, the one dataset with such twins, the cap, not the weight, is what helps; we know of no earlier statement of when the analytic inversion~\cite{Elkan2001} can be exact for a booster under a leaf-step cap.
\item \textbf{Per-label calibration with a prior fallback is the repair that travels, with its selection rule.} Per-label isotonic regression with dead labels mapped to their prior returns the weighted Santander booster to within $\nSplIsoPriorCeilGap$ of its in-sample ceiling under the test-split protocol and within $0.01$ under the deployment protocol; a shared map repairs nothing beyond tie resolution (\cref{rem:shared}), and the identity fallback turns the same calibrator into a loss (\cref{prop:dead}). The collapse and its repair reproduce on \nnMulan{} MULAN benchmarks and \nnLearners{} learners (\cref{fig:mulan}), where the calibrator also \emph{hurts} an unweighted model with scarce calibration positives; a cross-validated choice inside the calibration split removes that harm for the default learner and fails in \nnTauSelBelowRaw{} of \nnTauCells{} cells. Per-label monotone maps are known~\cite{Zadrozny2002,Tae2026MRP}; what is new is the source of the distortion, the fallback that decides the sign, and the selection by the deployed metric.
\end{enumerate}

\section{Setup and Theory}
\label{sec:theory}

\begin{figure*}[t]
\centering
\begin{tikzpicture}[x=1cm,y=1cm,font=\scriptsize,
  every node/.style={inner sep=1pt},
  axis/.style={-{Latex[length=1.2mm]},figgray!80!black,line width=0.4pt}]
\begin{scope}
  \node[anchor=west,font=\scriptsize\bfseries] at (-0.2,3.55) {(a) Two slippage regimes};
  \fill[figorange!15] (2.9,0) rectangle (4.6,3.15);
  \node[figorange!85!black,anchor=north,align=center,text width=1.6cm] at (3.75,1.72)
    {no cap: cells at $1.0$, $b_j$ undefined (Lemma~1)};
  \draw[axis] (0,0) -- (4.8,0);
  \node[anchor=north east,figgray!80!black] at (4.8,-0.05) {intended shift $\ln w_j$};
  \draw[axis] (0,0) -- (0,3.3);
  \node[anchor=south west,figgray!80!black] at (0.05,3.02) {realized shift $b_j$};
  \draw[figblue,dashed,line width=0.5pt] (0,0) -- (2.9,2.9);
  \node[figblue,anchor=south east] at (2.55,2.62) {$b_j=\ln w_j$ (Prop.~1)};
  \draw[figorange,line width=0.9pt] (0,0) -- (2.9,2.9);
  \draw[figorange,line width=0.9pt] (0,0) -- (1.45,1.45) .. controls (1.75,1.75) and (1.85,1.85) .. (4.6,1.88);
  \draw[figgray,dotted,line width=0.5pt] (0,1.88) -- (4.6,1.88);
  \node[figgray!80!black,anchor=east] at (-0.05,1.88) {$T\eta c$};
  \node[figorange!85!black,anchor=south west,align=left] at (3.0,1.98) {cap: shortfall,\\$b_j\le T\eta c$ (Prop.~2)};
  \node[anchor=north west,align=left,text width=4.7cm,figgray!60!black] at (-0.15,-0.42)
    {Inverting by $\ln w_j$ is exact only on the dashed line: a capped learner falls short of it, an uncapped one leaves it.};
\end{scope}
\begin{scope}[xshift=6.3cm]
  \node[anchor=west,font=\scriptsize\bfseries] at (-0.2,3.55) {(b) Which repair returns which part};
  \def\xa{1.9}\def\xb{2.7}\def\xc{4.1}\def\xd{4.75}\def\yb{2.15}
  \fill[figgray!55] (\xa,\yb) rectangle (\xb,\yb+0.45);
  \fill[figorange!45] (\xb,\yb) rectangle (\xc,\yb+0.45);
  \fill[pattern=north east lines,pattern color=figorange] (\xc,\yb) rectangle (\xd,\yb+0.45);
  \draw[figgray] (\xa,\yb) rectangle (\xd,\yb+0.45);
  \node[anchor=east,figgray!60!black] at (\xa-0.08,\yb+0.22) {the top-$K$ loss};
  \node[anchor=south] at (2.3,\yb+0.5) {odds shift};
  \node[anchor=south] at (3.4,\yb+0.82) {slippage (per-label, monotone)};
  \node[anchor=south] at (4.42,\yb+0.5) {ties};
  \draw[figgray!70,line width=0.3pt] (3.4,\yb+0.48) -- (3.4,\yb+0.8);
  \node[anchor=east] at (\xa-0.08,1.6) {inversion by $\ln w_j$};
  \draw[figgray!50,line width=0.4pt] (\xa,1.6) -- (\xd,1.6);
  \draw[figblue,line width=2.2pt] (\xa,1.6) -- (\xb,1.6);
  \node[anchor=east] at (\xa-0.08,1.1) {per-label monotone map};
  \draw[figgray!50,line width=0.4pt] (\xa,1.1) -- (\xd,1.1);
  \draw[figblue,line width=2.2pt] (\xa,1.1) -- (\xc,1.1);
  \node[anchor=east] at (\xa-0.08,0.6) {shared (pooled) map};
  \draw[figgray!50,line width=0.4pt] (\xa,0.6) -- (\xd,0.6);
  \node[anchor=north west,align=left,text width=5.0cm,figgray!60!black] at (-0.15,-0.42)
    {Blue: the part of the loss the repair can return. The inversion is exact only if $b_j=\ln w_j$ (Cor.~2); a shared map changes nothing beyond tie-breaking (Rem.~1); no separable map reaches the ties.};
\end{scope}
\begin{scope}[xshift=12.4cm]
  \node[anchor=west,font=\scriptsize\bfseries] at (-0.2,3.55) {(c) Dead-label takeover (Rem.~2)};
  \draw[axis] (0,0) -- (0,3.3);
  \node[anchor=south,figgray!80!black,rotate=90] at (-0.12,1.55) {score within one row};
  \draw[figgray,dotted] (0,1.25) -- (4.4,1.25);
  \node[anchor=west,figgray!80!black] at (4.42,1.25) {$\bar q$};
  \foreach \x/\h in {0.45/1.1, 0.85/0.85, 1.25/0.65, 1.65/0.42, 2.05/0.28}
    \draw[figblue,line width=2.2pt] (\x,0) -- (\x,\h);
  \node[figblue,anchor=south,align=center] at (1.25,1.32) {calibrated labels\\$g_j(s)\le\bar q$};
  \draw[figorange,line width=2.2pt] (3.05,0) -- (3.05,2.55);
  \node[figorange!85!black,anchor=south,align=center] at (3.05,2.6) {dead label,\\identity fallback};
  \node[figorange!85!black,anchor=west,align=left] at (3.15,1.9) {takes a\\top-$K$ slot};
  \draw[figorange!70!black,line width=2.2pt] (3.95,0) -- (3.95,0.2);
  \node[figorange!70!black,anchor=south,align=center] at (3.95,0.3) {prior\\fallback\\$\hat\pi_d$};
  \node[anchor=north west,align=left,text width=4.6cm,figgray!60!black] at (-0.15,-0.42)
    {A label with no calibration positives has no map. Passed through, its raw score outranks every calibrated label; mapped to its prior, it does not.};
\end{scope}
\end{tikzpicture}
\Description{Three schematic panels: realized against intended shift with a shortfall regime under a cap and a saturation regime without one; a bar of the top-K loss split into odds shift, slippage and ties with the reach of each repair; and a row of scores in which a dead label passed through the identity fallback outranks every calibrated label.}
\caption{\textbf{Schematic (no data).} \textbf{(a)} Elkan's identity puts the realized per-label shift $b_j$ on the diagonal. A booster with leaf-step cap $c$ cannot exceed $T\eta c$ (\cref{prop:budget}); without a cap the cells of rare labels saturate at a stored probability of exactly $1.0$, where $b_j$ is not identifiable (\cref{lem:leaf}). \textbf{(b)} The inversion returns the odds-shift part only where $b_j=\ln w_j$ (\cref{cor:exact}); a per-label monotone map also returns the slippage; nothing separable reorders the ties (\cref{rem:shared}). \textbf{(c)} A label with no calibration positives has no map; passed through, its raw score outranks every calibrated label of the row (\cref{prop:dead}).}
\label{fig:schematic}
\end{figure*}

Rows $i=1,\dots,n$ carry features $x_i$ and binary labels $y_{ij}$, $j=1,\dots,L$; $p_{ij}=\Pr(y_{ij}=1\mid x_i)$ is the marginal and $\pi_j=\Pr(y_{ij}=1)$ the prevalence. A scorer returns $s_{ij}$ and the system shows the $K$ largest scores of row $i$; $\sigma$ is the logistic function and shifts are in nat. We evaluate with $\mapk$ in the Kaggle definition (average precision at $K$ per row, averaged over rows with at least one positive) and report per-row bootstrap intervals. For precision@$K$ the population-optimal ranking sorts by $p_{ij}$, and one-vs-rest with a proper loss is consistent for it~\cite{Koyejo2015Consistent,Menon2019Reductions}. $\mapk$, the benchmark metric of both datasets, is a different objective (it normalizes per row by the number of positives); every optimality claim in this paper is about precision@$K$, while the ordering statements below (\cref{prop:odds,cor:exact,rem:shared,prop:dead}) concern within-row order and hold for any metric of the top-$K$ order, $\mapk$ included. A post-hoc calibrator is \emph{separable} if it maps label $j$ by a non-decreasing $g_j$ independently of the other labels; per-label Platt, isotonic, temperature and beta calibration~\cite{Platt1999,Zadrozny2002,Guo2017Calibration,Kull2017Beta} are separable, and so is the shared (pooled) isotonic map of~\cite{Ullah2025XMLCCalib} with $g_j\equiv g$. Calibration here always means the per-label marginal; the label-ranking calibration of~\cite{Thies2026LabelRankingCalib} concerns distributions over orderings and is a different object. A positive-class weight $w_j>0$ multiplies the loss of the positives of label $j$ (\texttt{scale\_pos\_weight}, \texttt{pos\_weight}). A \emph{cell} is one (row, label) score; a \emph{dead} label has no calibration positives; the two members of a matched pair are its \emph{arms} or \emph{twins}, and a \emph{rung} is one row of the repair ladder of \cref{tab:ladder}.

\begin{remark}[Shared maps never invert, distinct maps can]
\label{rem:shared}
If $g_1=\dots=g_L$ the calibrator never creates a strict inversion within a row (monotonicity), so a shared map is rank-neutral up to ties and can repair nothing beyond tie resolution. If two continuous maps differ at an interior point $a$, say $g_j(a)<g_k(a)$, continuity gives $b>a$ with $g_j(b)<g_k(a)$, and the row with $s_{ij}=b>a=s_{ik}$ is inverted. A per-label map can reorder, and in the population limit it should, since calibrated per-label maps return the marginals; it hurts only through estimation error, whose boundary case is a label without calibration positives (\cref{prop:dead}).
\end{remark}

\begin{proposition}[Odds shift, after Elkan]
\label{prop:odds}
Let $q_{ij}$ be the population minimizer of a strictly proper binary loss with positive-class weight $w_j$. Then $\operatorname{logit}q_{ij}=\operatorname{logit}p_{ij}+\ln w_j$~\cite{Elkan2001,Saerens2002}, and for $p_{ij}>p_{ik}$ the weighted scores reverse the pair iff $\operatorname{logit}p_{ij}-\operatorname{logit}p_{ik}<\ln w_k-\ln w_j$.
\end{proposition}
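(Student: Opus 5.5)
We work pointwise in $x_i$ and reduce the weighted loss to an unweighted proper loss at a reweighted class probability. Fix a row and a label $j$, write $p=p_{ij}$ and $w=w_j$, and let $\ell(1,q)$, $\ell(0,q)$ be the partial losses of the strictly proper loss. The weighted conditional risk of predicting $q$ is
\[
R_w(q)=w\,p\,\ell(1,q)+(1-p)\,\ell(0,q).
\]
Set $Z=wp+1-p>0$ and $p'=wp/Z$. Then
\[
R_w(q)=Z\bigl[p'\ell(1,q)+(1-p')\ell(0,q)\bigr],
\]
which is a positive multiple of the unweighted conditional risk at class probability $p'$. Strict properness says this bracket has the unique minimizer $q=p'$. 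Hence $q_{ij}=wp/(wp+1-p)$, and since $Z$ does not depend on $q$ the population minimizer is unchanged by the rescaling.

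Next we convert to log-odds. From $q_{ij}/(1-q_{ij})=wp/(1-p)$ we get $\operatorname{logit}q_{ij}=\operatorname{logit}p_{ij}+\ln w_j$. This is the first claim, and the boundary cases $p\in\{0,1\}$ are excluded or read in the extended reals.

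For the reversal claim, $\operatorname{logit}$ is strictly increasing, so $q_{ij}<q_{ik}$ holds iff $\operatorname{logit}q_{ij}<\operatorname{logit}q_{ik}$. Substituting the identity for both labels gives
\[
\operatorname{logit}p_{ij}-\operatorname{logit}p_{ik}<\ln w_k-\ln w_j .
\]
Given $p_{ij}>p_{ik}$, the pair is reversed precisely when this holds. Equality gives a tie, which we count as not reversed, consistent with the strict inequality in the statement.

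The only step needing care is the interpretation of "population minimizer". It must be taken conditionally on $x_i$ and per label, with $q$ ranging over $[0,1]$. We also need strict properness, rather than mere properness, to make the minimizer unique. The positive factor $Z$ is what keeps the argmin invariant, so everything else is algebra.
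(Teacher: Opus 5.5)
Your proof is correct and takes essentially the paper's route: factoring $Z=wp+1-p$ out of the weighted conditional risk makes explicit the ``tilted distribution whose posterior odds are $w_j$ times the original'' that the paper's one-line proof invokes. The flip condition then comes, as in the paper, from comparing the shifted logits.
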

\begin{proof}
The weighted loss is the unweighted loss under the tilted distribution whose posterior odds are $w_j$ times the original; the flip condition is the difference of the shifted logits.
\end{proof}
\begin{corollary}
\label{cor:odds}
(a) A common weight $w_j\equiv w$ is a shared map and never changes within-row order. (b) The flipped pairs, hence the top-$K$ loss of the \emph{ideal} weighted model, depend only on the unweighted logit gaps and the intended weights, so they can be computed from an unweighted model before weighted training (\emph{what-if}); this describes a learner that realizes the shift exactly, not necessarily the one a finite learner produces. (c) When $w_jp_{ij}\ll1$, $q_{ij}\approx w_jp_{ij}$, so weighted training approximates the optimal ranking for a propensity-weighted precision@$K$ with propensities $1/w_j$~\cite{Jain2016XMLLoss}: training weights silently change the metric.
\end{corollary}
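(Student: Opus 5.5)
The plan is to prove each of the three parts of \cref{cor:odds} directly from \cref{prop:odds}, using \cref{rem:shared} for part (a).

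\emph{Part (a).} With $w_j\equiv w$, \cref{prop:odds} gives $q_{ij}=\sigma(\operatorname{logit}p_{ij}+\ln w)$ for every label. So the weighted scores are $g(p_{ij})$ for the single map $g(p)=\sigma(\operatorname{logit}p+\ln w)$, which is strictly increasing. By \cref{rem:shared}, a shared non-decreasing map never creates a strict inversion within a row. Strictness here also means it creates no new ties.

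The same conclusion falls out of the flip condition. A pair with $p_{ij}>p_{ik}$ reverses iff
\[
\operatorname{logit}p_{ij}-\operatorname{logit}p_{ik}<\ln w_k-\ln w_j=0,
\]
which is impossible.

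\emph{Part (b).} Fix a row $i$. By \cref{prop:odds}, whether the pair $(j,k)$ is flipped is decided by comparing the unweighted gap $\operatorname{logit}p_{ij}-\operatorname{logit}p_{ik}$ with $\ln w_k-\ln w_j$. Equivalently, the whole weighted order of the row is the sort of $\operatorname{logit}p_{ij}+\ln w_j$ over $j$.

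Every top-$K$ metric, including $\mapk$, is a function of this order and of the labels $y_{ij}$. Hence the loss of the ideal weighted model is a function of $(p_{ij})$, $(w_j)$ and $(y_{ij})$ alone.

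To make this a what-if computation, replace $p_{ij}$ by the unweighted model's estimate: add $\ln w_j$ to its logits and re-sort. No weighted training is needed. The caveat in the statement is definitional: the identity describes the population minimizer. A finite learner's realized shift need not equal $\ln w_j$, and that gap is what the paper calls slippage. Nothing further needs proof here.

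\emph{Part (c).} Write $q=wp/(wp+1-p)$. For small $p$ and $wp\ll1$ the denominator is $1+O(p+wp)$, so $q=wp\,(1+O(wp+p))$. The ranking by $q_{ij}$ therefore agrees to first order with the ranking by $w_jp_{ij}$.

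Sorting by $w_jp_{ij}$ is Bayes-optimal for the metric that credits a hit on label $j$ with weight $w_j$. This is exactly the propensity-scored precision@$K$ of \cite{Jain2016XMLLoss} with propensity $1/w_j$, since its expectation for a shown label is $p_{ij}/(1/w_j)$. Linearity of expected gain over the shown set then gives optimality of the top-$K$ by $w_jp_{ij}$.

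\emph{Main obstacle.} The delicate point is in (c): the approximation is only first order, so the claim should be stated as agreement of orderings up to pairs whose $w p$ values are within the $O(wp)$ relative error. I would phrase the conclusion as "approximates" precisely in that sense, rather than claim exact equivalence.
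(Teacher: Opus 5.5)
Your proposal is correct and matches the paper, which states this corollary without a separate proof as a direct consequence of \cref{prop:odds}: the right-hand side of the flip condition vanishes for (a), with \cref{rem:shared} for the shared map; the flip condition depends only on logit gaps and weights for (b); and (c) follows from the first-order expansion of $q=wp/(wp+1-p)$ together with linearity of the propensity-scored gain. One sharpening of your obstacle paragraph: since $q$ is increasing in $\mathrm{odds}(q)=w\,\mathrm{odds}(p)$, ranking by $q_{ij}$ is exactly ranking by $w_jp_{ij}/(1-p_{ij})$, so the order departs from that of $w_jp_{ij}$ only through the factor $1/(1-p_{ij})=1+O(p_{ij})$ — hence your added small-$p$ assumption, rather than $w_jp_{ij}\ll1$ itself, is what controls the ranking claim (it is implied by $w_jp_{ij}\ll1$ when $w_j\ge1$ but not when $w_j<1$).
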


\begin{lemma}[Leaf saturation]
\label{lem:leaf}
For weighted log-loss the optimal constant in a tree leaf with $n_+$ positives and $n_-$ negatives is $\rho(w)=wn_+/(wn_++n_-)$. Any leaf with $n_+\ge1$ satisfies $\rho(w)\ge1-n_-/(wn_+)$, so its prediction is within $\varepsilon$ of one once $w\ge n_-/(\varepsilon n_+)$; all test cells routed to such leaves tie near one and lose their order across those leaves.
\end{lemma}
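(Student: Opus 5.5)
The plan is to argue in three short steps: compute the leaf optimum directly, bound $1-\rho(w)$ by an elementary inequality, and then read off the tie claim from the fact that a tree predicts a constant per leaf.

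\emph{Leaf optimum.} Fix a leaf with $n_+$ positives and $n_-$ negatives and a constant prediction $\rho\in(0,1)$. Its weighted log-loss is
\[
\ell(\rho) = -w n_+\ln\rho - n_-\ln(1-\rho),
\]
which is strictly convex on $(0,1)$. Setting $\ell'(\rho)=-wn_+/\rho+n_-/(1-\rho)=0$ gives $\rho(w)=wn_+/(wn_++n_-)$. This is the same tilted-distribution reasoning as \cref{prop:odds}, applied to the empirical distribution inside the leaf. If $n_-=0$ the infimum is approached as $\rho\to1$, and the formula returns $1$ consistently.

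\emph{Bound.} For $n_+\ge1$ we have
\[
1-\rho(w)=\frac{n_-}{wn_++n_-}\le\frac{n_-}{wn_+},
\]
since dropping $n_-\ge0$ from the denominator can only increase the fraction. Hence $\rho(w)\ge1-n_-/(wn_+)$. Imposing $n_-/(wn_+)\le\varepsilon$ is exactly $w\ge n_-/(\varepsilon n_+)$, which gives the threshold in the statement.

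\emph{Ties.} A tree's prediction is constant on each leaf, so every test cell routed to a saturating leaf receives a value in $[1-\varepsilon,1]$. Cells in different such leaves therefore differ by at most $\varepsilon$, and that residual order carries no information about $p_{ij}$. Once $\varepsilon$ is below the storage resolution (float rounding in the stored probability), the values coincide with exactly $1.0$, giving exact ties. No separable map can undo this, because a non-decreasing $g_j$ sends equal inputs to equal outputs (\cref{rem:shared}).

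The step needing care is this tie claim, since it concerns a single leaf value rather than an ensemble. Here I would state the lemma for the leaf-level fit, or equivalently for the final-stage constant that gradient boosting converges toward when the leaf step is uncapped. With an uncapped Newton step the additive score in that region can grow without bound across rounds, so the composed probability still tends to one. The intended reading of "tie near one" is therefore $\varepsilon$-closeness, and it becomes exact ties at the storage precision.
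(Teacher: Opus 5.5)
Your proof is correct and is the paper's argument: the paper likewise minimizes $-wn_+\ln\rho-n_-\ln(1-\rho)$ to get $\rho(w)=wn_+/(wn_++n_-)$ and uses $1-\rho=n_-/(wn_++n_-)\le n_-/(wn_+)$, and it does not argue the tie clause, which you treat in more detail. One aside in that extra part is wrong: in exact arithmetic the order across saturating leaves is not uninformative, since the leaf odds $\rho(w)/(1-\rho(w))=wn_+/n_-$ are increasing in the unweighted odds $n_+/n_-$ within a label (the paper notes the single-tree distortion is monotone in each label) and are just the Elkan-shifted odds across labels, so the order is actually lost only when the stored probability rounds to $1.0$, which is the mechanism you correctly invoke next.
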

\begin{proof}
$-wn_+\ln\rho-n_-\ln(1-\rho)$ is minimized at $\rho=wn_+/(wn_++n_-)$, and $1-\rho=n_-/(wn_++n_-)\le n_-/(wn_+)$.
\end{proof}
The lemma is exact for a single tree whose leaf values are weighted class frequencies and describes the fixed point that a boosted ensemble approaches when it fits the weighted objective well; it is the tree analogue of importance weights losing their effect on a flexible learner~\cite{Byrd2019}. Two consequences must be kept apart. The realized shift is no longer the constant $\ln w_j$, so the analytic inversion $q\mapsto q/(q+w(1-q))$ stops working; within a single tree the distortion is monotone in each label, and how much of a boosted ensemble's distortion a per-label monotone map fitted on calibration data recovers is what the ladder of \cref{sec:santander} measures. And the cells tied at one have lost their mutual order: a separable map of the stored probabilities can order them only by label, identically in every row; this part is bounded by the gap between the in-sample per-label ceiling and the unweighted model. Coarse leaves on small data tie cells without saturation, a different mechanism (\cref{sec:dose}(iv)).

\begin{proposition}[Step budget bounds the realized shift]
\label{prop:budget}
Fix label $j$. Let the weighted and the unweighted booster start from the same initial score $F_0=\operatorname{logit}\hat\pi_j$, and let each of $T$ rounds add a tree whose leaf values are clipped to $[-c,c]$ and scaled by $\eta$ (\texttt{max\_delta\_step}$=c$). Then $|F^{w}_j(x)-\operatorname{logit}\hat\pi_j|\le T\eta c$ and $|F^{0}_j(x)-\operatorname{logit}\hat\pi_j|\le T\eta c$ for every $x$, hence $|F^{w}_j(x)-F^{0}_j(x)|\le 2T\eta c$; and the label-free prevalence-matching shift $b_j$, defined by $\frac1n\sum_i\sigma(F^{w}_j(x_i)-b_j)=\hat\pi_j$, satisfies $|b_j|\le T\eta c$. Consequently, if $T\eta c<\ln w_j$ the intended shift is unrealizable, and the analytic inversion $\operatorname{logit}q-\ln w_j$ overshoots the label-free shift by at least $\ln w_j-T\eta c$.
\end{proposition}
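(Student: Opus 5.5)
The plan is a telescoping argument followed by a monotonicity argument for the shift equation.

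\emph{Step 1: sup-norm bound on each booster.} Write the boosted score as
$F^{w}_j(x)=F_0+\sum_{t=1}^T \eta\, f_t(x)$,
where $f_t(x)$ is the clipped value of the leaf that $x$ reaches in round $t$. Clipping gives $|f_t(x)|\le c$ for every $x$. The triangle inequality then gives $|F^{w}_j(x)-\operatorname{logit}\hat\pi_j|\le T\eta c$. The argument never uses the loss or the weight, only the clip and the shared start $F_0$. The identical argument applies to $F^{0}_j$, and one more triangle inequality through $\operatorname{logit}\hat\pi_j$ yields $2T\eta c$. I will note that it is immaterial whether the clip is applied before or after the Newton step, as long as the added increment is at most $\eta c$ in absolute value.

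\emph{Step 2: the prevalence-matching shift.} Define
$\Phi(b)=\frac1n\sum_i\sigma(F^{w}_j(x_i)-b)$.
$\Phi$ is continuous and strictly decreasing from $1$ to $0$, so $b_j$ exists and is unique whenever $0<\hat\pi_j<1$.

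By Step 1, each score satisfies $F^{w}_j(x_i)\in[\operatorname{logit}\hat\pi_j-T\eta c,\ \operatorname{logit}\hat\pi_j+T\eta c]$. Hence
$\Phi(T\eta c)\le\sigma(\operatorname{logit}\hat\pi_j)=\hat\pi_j$ and $\Phi(-T\eta c)\ge\hat\pi_j$.

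Monotonicity of $\Phi$ then forces $b_j\in[-T\eta c,T\eta c]$. This is the step needing the most care, because the bound must hold for the average over rows rather than for each row. It works because $\sigma$ is monotone, so each summand is bounded by the same worst case. No convexity is needed.

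\emph{Step 3: consequences.} Suppose $T\eta c<\ln w_j$. The weighted score then differs from the common start by less than $\ln w_j$ everywhere. The label-free shift $b_j$ is at most $T\eta c$, so no cell's realized shift, measured by $b_j$, can equal $\ln w_j$. The analytic inversion subtracts $\ln w_j$, whereas the prevalence-matching correction subtracts $b_j\le T\eta c$. The two corrected scores therefore differ by $\ln w_j-b_j\ge\ln w_j-T\eta c$ uniformly in $x$, which is the claimed overshoot.

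I read ``unrealizable'' as the statement that the realized prevalence-matching shift cannot reach $\ln w_j$. I will also remark that Steps 1 and 3 hold pointwise, and the unweighted twin bound is only needed for the pairwise statement $2T\eta c$.
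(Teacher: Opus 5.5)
Your proof is correct and follows the paper's own argument: the per-round increment bound $\eta c$ telescopes to the two sup-norm bounds, and evaluating the decreasing map $b\mapsto\frac1n\sum_i\sigma(F^{w}_j(x_i)-b)$ at $b=\pm T\eta c$ confines $b_j$ to $[-T\eta c,T\eta c]$. Your Step 3, which the paper leaves implicit, correctly reads ``unrealizable'' through $b_j$ rather than through the pointwise difference $F^{w}_j-F^{0}_j$, which is only bounded by $2T\eta c$ (as \cref{cor:exact} notes).
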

\begin{proof}
Each round adds at most $\eta c$ in absolute value, so the first two bounds telescope. The mean of $\sigma(F^{w}_j(x_i)-b)$ over $i$ is decreasing in $b$; at $b=T\eta c$ every argument is at most $\operatorname{logit}\hat\pi_j$, so the mean is at most $\hat\pi_j$, and at $b=-T\eta c$ it is at least $\hat\pi_j$; the solution therefore lies in $[-T\eta c,T\eta c]$.
\end{proof}
\begin{corollary}[When the inversion is exact]
\label{cor:exact}
Inverting by $\ln w_j$ returns the unweighted ranking if $F^{w}_j(x)-F^{0}_j(x)=\ln w_j$ for all $j$ and $x$, and more generally whenever the residual $\delta_j(x)=F^{w}_j(x)-F^{0}_j(x)-\ln w_j$ does not depend on $j$ within a row. Under a cap, pointwise equality requires $\ln w_j\le 2T\eta c$ for every $j$; without a cap the bound is vacuous and \cref{lem:leaf} applies: once $w_j\ge n_-/(\varepsilon n_+)$ in a leaf, its cells sit within $\varepsilon$ of one, their stored single-precision probabilities round to $1.0$ once the margin exceeds the float32 resolution, and a separable map of the stored probabilities can order them only by label index, the same in every row.
\end{corollary}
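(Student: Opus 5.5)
The corollary makes four claims, and I would prove them in the order they are stated, treating each as a short consequence of an earlier result.

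\textbf{Exactness of the inversion.} After inverting by $\ln w_j$, label $j$ in row $i$ has score
\[
\tilde s_{ij}=F^{w}_j(x_i)-\ln w_j=F^{0}_j(x_i)+\delta_j(x_i).
\]
If $\delta_j(x_i)=\delta(x_i)$ for every $j$, then within row $i$ the inverted scores are the unweighted scores plus a constant common to all labels. Adding a row constant preserves the strict order and the ties of every pair $(j,k)$, so the top-$K$ list is unchanged. The map $q\mapsto q/(q+w_j(1-q))$ is $\sigma(\operatorname{logit}q-\ln w_j)$ and $\sigma$ is strictly increasing, so the same conclusion holds when the inversion is applied to probabilities. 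The case $\delta\equiv0$ is the stated special case.

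\textbf{Requirement under a cap.} Pointwise equality means $F^{w}_j(x)-F^{0}_j(x)=\ln w_j$ for all $x$. \Cref{prop:budget} bounds the left side in absolute value by $2T\eta c$, so $\ln w_j\le 2T\eta c$ is necessary. I would remark that this is only a necessary condition, consistent with the word ``requires''.

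\textbf{The uncapped case.} With no clipping the telescoping argument of \cref{prop:budget} gives no bound, so the budget is vacuous and we invoke \cref{lem:leaf}: once $w_j\ge n_-/(\varepsilon n_+)$ the leaf value satisfies $\rho\ge1-\varepsilon$. For the rounding claim I would use the float32 spacing just below one, which is $2^{-24}$. Any real number within about $2^{-25}$ of one rounds to $1.0$ under round-to-nearest. So taking $\varepsilon$ below that threshold, equivalently a margin $\operatorname{logit}\rho\gtrsim 17.3$ nat, makes every such cell store exactly $1.0$. This stated-probability step, relating the ensemble's margin to the leaf fixed point, is the one delicate point. As after \cref{lem:leaf}, I would phrase it for the stored output rather than claim it for every boosted ensemble.

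\textbf{Separable maps on tied cells.} This part is pure combinatorics. A separable calibrator sends every cell of label $j$ stored at $1.0$ to the single value $g_j(1.0)$. Hence any two saturated cells $(i,j)$ and $(i,k)$ are ordered by comparing $g_j(1.0)$ with $g_k(1.0)$, which does not depend on $i$. The map therefore induces one fixed order over labels, and ties among cells of the same label remain ties. Any further tie-breaking, say by index, is likewise row-independent. No choice of the $g_j$ can reproduce row-specific orders among saturated cells.
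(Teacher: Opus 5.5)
Your proposal is correct and follows the paper's proof on the parts it actually argues. The inverted score is $F^{0}_j(x)+\delta_j(x)$, a $j$-independent residual shifts the whole row equally, and \cref{prop:budget} forces $\ln w_j\le 2T\eta c$ for pointwise equality; beyond that, you derive the float32 threshold ($2^{-25}$ below one, a margin of $25\ln 2\approx 17.3$ nat) and the row-independence of any separable map on cells stored at $1.0$, which the paper only asserts through \cref{lem:leaf} and its surrounding text, while the paper's proof adds one remark you omit: under a cap a row-constant residual only bounds the spread $\max_j\ln w_j-\min_j\ln w_j$ by $4T\eta c$, which is why a common weight escapes the $2T\eta c$ requirement.
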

\begin{proof}
The inverted score of label $j$ at $x$ is $F^{0}_j(x)+\delta_j(x)$; a $\delta$ that does not depend on $j$ shifts every label of the row equally. Under a cap, $|F^{w}_j(x)-F^{0}_j(x)|\le 2T\eta c$ by \cref{prop:budget}, so pointwise equality to $\ln w_j$ needs $\ln w_j\le 2T\eta c$; a row-constant residual only bounds the spread $\max_j\ln w_j-\min_j\ln w_j$ by $4T\eta c$, which is why a common weight (\cref{cor:odds}(a)) escapes the bound.
\end{proof}
LightGBM starts both boosters from the unweighted label mean, which the released EDA file verifies on a one-round model.

We call the gap between the shift a weight promises, $\ln w_j$, and the shift the learner realizes \emph{odds-shift slippage}; two regimes produce it (\cref{fig:schematic}a). Under a binding cap, $b_j\le T\eta c<\ln w_j$: the realized shift falls short, no cell saturates, and the inversion over-corrects by at least the shortfall. Without a cap leaves saturate, and on a saturated label $b_j$ is not identifiable from the stored probabilities, since no finite shift moves a cell that has rounded to $1.0$ and the bisection returns its clipping bound; the inversion is left with ties beyond any separable map of the stored probabilities. Where no cell saturates the shift is identified and, as \cref{sec:santander} measures, close to $\ln w_j$. A monotone map fitted per label absorbs the ordered part of both regimes; nothing separable absorbs the ties.

\begin{remark}[Dead-label takeover]
\label{prop:dead}
Let label $d$ have no calibration positives and keep its raw score $s_{id}$ (identity fallback), while every calibrated label $j$ is mapped by $g_j$ with $g_j(s)\le\bar q$ for all $s$ (isotonic outputs are bounded by the largest block mean of the calibration split). If $s_{id}>\bar q$ then $d$ is ranked above every calibrated label in row $i$ regardless of $y_{id}$; with $m$ such labels, $\min(m,K)$ of the top-$K$ slots are occupied by them (\cref{fig:schematic}c). The condition holds whenever raw scores are inflated relative to calibrated ones, which is the situation that motivates calibration. This is an implementation pitfall rather than a library default: of \nnSurveyImpl{} calibration implementations we examined (\Supp{Table~S7}), every one-vs-rest calibrator returns a constant or raises on a single-class split; any wrapper that catches the failure and returns the input produces the takeover. A prior fallback ($\tilde s_{id}=\hat\pi_d$) or exclusion removes the effect; the policy should be an explicit argument of the calibrator.
\end{remark}

\section{Matched Pairs at Scale}
\label{sec:santander}

\begin{figure*}[t]
\centering
\includegraphics[width=.9\textwidth]{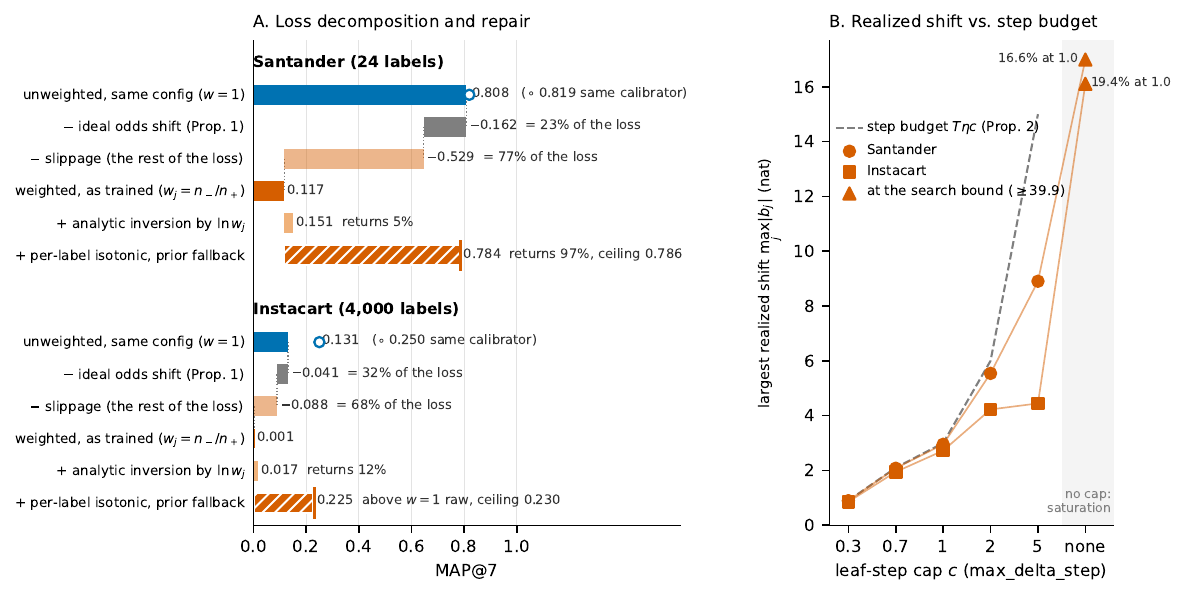}
\Description{Two panels. Panel A: a waterfall of MAP at 7 per dataset from the unweighted model down through the ideal odds shift and the slippage to the trained weighted model, then up by the analytic inversion and by the per-label isotonic repair. Panel B: the largest realized per-label shift against the leaf-step cap, with the step-budget line.}
\caption{\textbf{The textbook odds shift accounts for a minority of a weighted LightGBM ranker's top-$K$ loss on both benchmarks; the rest is slippage, most of which a per-label monotone map recovers and the analytic inversion does not.} \textbf{A.} $\mapseven$ of the unweighted LightGBM ($w{=}1$); the loss to the ideal odds shift, i.e.\ $\ln w_j$ added to its logits (\cref{prop:odds}); the further loss to the weighted model trained with the identical configuration ($w_j{=}n_-/n_+$), which we call slippage; and what its analytic inversion with the known $w_j$ and its per-label isotonic repair with the prior fallback return (tick: in-sample per-label ceiling; open marker: the $w{=}1$ model under the same calibrator). On Santander the shift is \noddsShare\% of the loss $\nNraw-\nSraw$ and the repair returns \nrepairShare\% of it; on Instacart the shift is \nIoddsShare\% of $\nINraw-\nISraw$ and the same calibrator lifts the unweighted model to $\nINplIsoPrior$. One matched pair per dataset; over \nseedN{} train-row draws the weighted Santander bar is $\nseedSraw$ (Robustness), its repaired level stable. \textbf{B.} Largest realized per-label shift $\max_j|b_j|$ (label-free prevalence matching) against the leaf-step cap $c$; the dashed line is the step budget $T\eta c$ of \cref{prop:budget}. Every point lies below the line (\ncapNwithinBound{} of \ncapN{} caps on each dataset) and no cell saturates under any cap; without a cap the shift runs into the search bound (triangle) and \ncapNoneSatOne\% (Santander) and \nIcapNoneSatOne\% (Instacart) of cells sit at exactly $1.0$. A: repair on the 70\% evaluation split, else all test rows; 95\% intervals are narrower than the markers.}
\label{fig:one}
\end{figure*}

\begin{table*}[t]
\caption{Repair ladder on two datasets and two learners ($\mapseven$, test-split protocol, 70\% evaluation split; the ceiling row is fitted and evaluated on all test rows; the deployment protocol, \Supp{Table~S3}, differs by at most $\nprotoMaxDiff$ on the bold rung). Columns: the same configuration trained with the recommended per-label weight ($w{=}r$, $r_j=n_{-,j}/n_{+,j}$) and without it ($w{=}1$), for LightGBM and for a shared-trunk MLP with per-label \texttt{pos\_weight}, on Santander (\nnLabels{} products) and Instacart (top-\nInLabels{} products). ``--'': an unweighted model has no $w_j$ to invert. Bold = per-label isotonic regression with the prior fallback. On all four pairs the weighted model ranks below the popularity baseline and the bold rung returns it to its in-sample ceiling (where that ceiling is below the unweighted raw score, the gap is consistent with the tied cells); the inversion recovers little except on the Instacart MLP, the one pair that realized its shift without saturating (\cref{cor:exact}). (WGBoost and the identity and exclusion fallbacks: \Supp{Table~S1}.)}
\label{tab:ladder}
\footnotesize
\setlength{\tabcolsep}{4pt}
\begin{tabular}{@{}lcccccccc@{}}
\toprule
 & \multicolumn{4}{c}{Santander (\nnLabels{} labels)} & \multicolumn{4}{c}{Instacart (top-\nInLabels{} products)} \\
\cmidrule(lr){2-5}\cmidrule(lr){6-9}
 & \multicolumn{2}{c}{LightGBM} & \multicolumn{2}{c}{MLP} & \multicolumn{2}{c}{LightGBM} & \multicolumn{2}{c}{MLP} \\
 & $w{=}r$ & $w{=}1$ & $w{=}r$ & $w{=}1$ & $w{=}r$ & $w{=}1$ & $w{=}r$ & $w{=}1$ \\
\midrule
raw score & 0.117 & 0.808 & 0.112 & 0.766 & 0.001 & 0.131 & 0.008 & 0.056 \\
analytic inversion with $w_j$ & 0.151 & -- & 0.247 & -- & 0.017 & -- & 0.070 & -- \\
per-label iso, dead$\to$prior & \textbf{0.784} & \textbf{0.819} & \textbf{0.549} & \textbf{0.787} & \textbf{0.225} & \textbf{0.250} & \textbf{0.081} & \textbf{0.056} \\
pooled (shared) isotonic & 0.261 & 0.808 & 0.178 & 0.766 & 0.004 & 0.149 & 0.008 & 0.055 \\
popularity (train prevalence) & 0.535 & 0.535 & 0.535 & 0.535 & 0.056 & 0.056 & 0.056 & 0.056 \\
\midrule
oracle per-label ceiling & 0.786 & 0.822 & 0.548 & 0.791 & 0.230 & 0.269 & 0.085 & 0.056 \\
mean within-label AUC & 0.793 & 0.824 & 0.789 & 0.847 & 0.636 & 0.670 & 0.836 & 0.424 \\
cells at $1.0$ (\%) & 16.6 & 0.0 & 23.3 & 0.0 & 19.4 & 0.0 & 0.0 & 0.0 \\
labels without calibration positives & 3 & 3 & 3 & 3 & 0 & 0 & 0 & 0 \\

\bottomrule
\end{tabular}
\end{table*}

\paragraph{Data and models.} The Santander Product Recommendation data~\cite{KaggleSantander2016} has \nnLabels{} products. We sort the public training file by date and cut it at fixed row-count fractions into \nnTrain{} training, \nnValRows{} validation (held out from fitting, preceding the test period) and \nnTest{} test rows; the label of a row is the set of products added in the following month, and $\mapseven$ follows the Kaggle definition on the rows with at least one added product (\nfracRowsWithPos\% of the \nnEvalRows{} evaluation rows). \Sm and \Nm are LightGBM one-vs-rest models trained by one script with \emph{identical} configuration (60 trees, learning rate $0.05$, 31 leaves, no subsampling, one seed, the same features and split); the only difference is that \Sm sets $w_j=n_{-,j}/n_{+,j}$ per label (between \nwMin{} and \nwMax) and \Nm sets no weight. \Wm (Wasserstein gradient boosting~\cite{Matsubara2024WGBoost}) estimates probabilities directly without weights (\Supp{Table~S1}). The \emph{test-split} (diagnostic) protocol fits calibrators on a random 30\% of the test rows (seed 42, \nnVal{} rows) and evaluates on the remaining 70\%; it is optimistic for deployment, and its gain on the unweighted model ($\nciNprior$) bounds what the protocol itself contributes. \Supp{Table~S5} sweeps the split size. The \emph{deployment} protocol fits calibrators on the validation period and evaluates on all test rows (\Supp{Table~S3}). Under the test-split protocol \nSnDead{} labels have no calibration positives and every calibrated label has at least \ncalPosMinAlive{} (median \ncalPosMedAlive). Bootstrap intervals use $B=\nbootB$ per-row resamples of the \nbootN{} evaluation rows with a positive. One instrument matters for every saturation number below: predictions are stored as single-precision probabilities, so a cell ``at exactly $1.0$'' is one whose raw margin reaches $\nsatMarginF$ nat, the float32 rounding threshold below one. Every saturation share is a property of that stored array, the object a deployed ranker consumes; a pipeline that keeps the raw margin can still order those cells, a calibrator of the probabilities cannot.

The Instacart 2017 data~\cite{Instacart2017} has 49,688 products and, for 131,209 users, an ordered history of prior orders. We use it as a next-order ranking task in the form of the next-basket literature (\cref{sec:related}): one row per user and reference order, features from the orders before it only (8,320 columns), a within-user chronological split (third-last order for training, second-last as validation period, last as test period; 131,209 rows each), and the set of products in the reference order as the label. The label set is the top-\nInLabels{} products by training frequency, the budget of one LightGBM fit per label and configuration; the Kaggle task is not comparable. The weights range from \nIwMin{} to \nIwMax{} ($\ln w_j$ from \nIlnwMin{} to \nIlnwMax); the calibration split has \nInVal{} rows, \nISnDead{} dead labels and a median of \nIcalPosMedAlive{} positives per label, and popularity ranking scores $\nIpopMAP$. The LightGBM pair uses the Santander configuration; both boosters start from the same initial score $\operatorname{logit}\hat\pi_j$, as \cref{prop:budget} assumes. The second learner is a shared-trunk MLP (two hidden layers of width 256, binary cross-entropy with per-label \texttt{pos\_weight}$\in\{1,r_j\}$, 20 epochs, no early stopping, seed 42) on the same rows and labels; on Instacart it scores all \nIMAnLabels{} products in one model, of which the top-\nInLabels{} columns enter \cref{tab:ladder} (all-product ladder: \Supp{S7}).

\paragraph{Decomposition and repair ladder (\cref{fig:one}A, \cref{tab:ladder}).} For this pair the weighted model loses $\nSloss$ $\mapseven$ against \Nm; we express every repair as the share of this loss it returns. Adding $\ln w_j$ to the logits of \Nm is the ideal weighted model of \cref{prop:odds}; it scores $\nwhatifIdeal$, keeps \noverlapWhatif{} of 7 items of the unweighted top-7, and accounts for \noddsShare\% of the observed loss. The trained \Sm keeps only \noverlapS{} of 7 items, has \nSsatOne\% of its cells at exactly $1.0$, and the analytic inversion with the known $w_j$ returns \nelkanShare\% ($\nSelkan$). The ladder measures how much of the remaining distortion a per-label monotone map recovers, rung by rung. An intercept-only logit shift fitted per label on the calibration split returns \noffsetShare\% ($\nSplOffsetPrior$), so for most labels the realized distortion is close to a constant --- but not the constant $\ln w_j$. \Supp{Fig.~S1} shows why label by label: the prevalence-matching shift $b_j$ equals $\ln w_j$ on the labels whose cells never saturate (median $b_j/\ln w_j=\nSbOverLnwMedian$ over \nSnIdentShift{} of \nnLabels{} labels), as \cref{prop:odds} says, and is not identified on the other \nnCappedShift{}, whose cells at exactly $1.0$ no finite shift moves (\Supp{S7}). Per-label isotonic regression with the prior fallback returns \nrepairShare\% ($\nSplIsoPrior$), within $\nSplIsoPriorCeilGap$ of the in-sample per-label ceiling ($\nSceil$). The last \ntieShare\% ($\nSceilGap$) lies above that ceiling and is consistent with the saturated cells: re-ordering only the cells at exactly $1.0$ by the unweighted model's order returns \ntieShareUnw\% of the loss, an oracle order that reads the test labels \ntieShareOracle\% (\Supp{S7}). Three controls guard the comparison. Per-row bootstrap intervals put the gap between \Nm raw and the repaired \Sm at $\nciNminusS$, and under the same calibrator \Nm itself reaches $\nNplIsoPrior$, so the like-for-like gap is $\nNplIsoPrior-\nSplIsoPrior$. The within-label AUC, which no monotone map changes, is $\nNauc$ for \Nm against $\nSauc$ for \Sm. Pooled isotonic leaves \Wm and \Nm unchanged (\cref{rem:shared}) and moves \Sm to $\nSpooled$ only by merging tied cells, which the stable sort then breaks by label index. Trained $\sqrt{r}$ and $10r$ arms behave the same way (raw $\nSsqrtRaw$ and $\nStenrRaw$, repaired $\nSsqrtPlIsoPrior$ and $\nStenrPlIsoPrior$).

\paragraph{Instacart: the same decomposition, a larger role for the calibrator.} The Instacart pair loses $\nISloss$ ($\nINraw$ to $\nISraw$), keeping \nIoverlapS{} of 7 items of the unweighted top-7. The ideal odds shift accounts for \nIoddsShare\% of that loss and already changes the top-7 of \nIfracRowsChangedWhatif\% of rows; the rest is slippage. One thing differs: the same calibrator also lifts the \emph{unweighted} model, from $\nINraw$ to $\nINplIsoPrior$, because after 60 rounds at learning rate $0.05$ the scores of \nInLabels{} labels are not yet comparable across labels and the per-label maps supply that comparability. The cost of the weight is therefore the like-for-like gap after calibration, $\nINplIsoPrior$ against $\nISplIsoPrior$ (ceilings $\nINceil$ against $\nISceil$), not the raw loss. Bootstrap intervals separate every comparison drawn here (repaired weighted model minus its inversion: $\nIciSpriorMinusElkan$). Here the ties dominate: \nIStieCellsExactOne\% of cells sit at exactly $1.0$ and \emph{every} row has at least $K$ of them (mean \nIStieCellsPerRowMean{} of \nInLabels), so the top-$K$ of the raw weighted model is decided entirely by how the sort breaks ties; re-ordering the tied cells by the unweighted order returns \nItieShareUnw\% of the loss (\Supp{S7}). The collapse of the weighted Instacart model is therefore mostly a tie-breaking artefact of saturation (\cref{lem:leaf} gives its single-tree limit), which no separable map of the stored probabilities can undo and a cap prevents at training time.

\begin{figure}[t]
\centering
\includegraphics[width=.8\columnwidth]{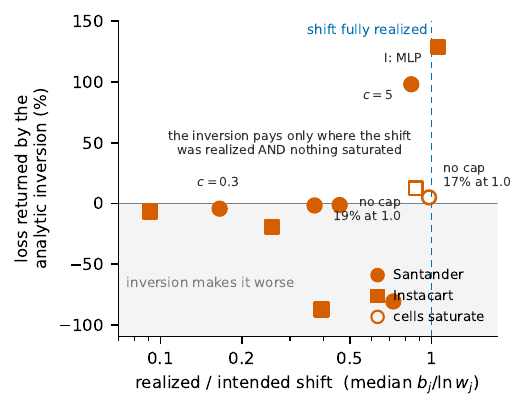}
\Description{Scatter of every trained weighted model: the median realized-over-intended per-label shift on a logarithmic horizontal axis against the share of the top-K loss that the analytic inversion returns. Every model sits at or near a ratio of one or well to its left. The small caps sit well left of one and return nothing or less; the largest Santander cap sits just left of one and returns almost all of the loss, as does the Instacart MLP just right of one; the two uncapped, saturated models sit near one on the horizontal axis but only just above zero.}
\caption{\textbf{Inverting by the known $\ln w_j$ pays only where the learner realized the shift \emph{and} no cell saturated.} One marker per weighted model: both datasets, both learners, and each leaf-step cap. Horizontal: the median of $b_j/\ln w_j$ over the labels with no saturated cell, the only ones on which it is identified. Vertical: what the inversion returns as a share of that model's $\mapseven$ loss against its own unweighted twin, so $100\%$ would restore the unweighted ranking (the losses differ by more than an order of magnitude across models). Filled markers have essentially no cell at exactly $1.0$; open ones carry the share shown. The uncapped Santander booster separates the two conditions: it realized $\nSbOverLnwMedian$ of its intended shift and the inversion still returns only \nelkanShare\%, because \ncapNoneSatOne\% of its cells sit at exactly $1.0$. Omitted: the Santander MLP (shift identifiable on \nMSnIdentShift{} of \nnLabels{} labels) and two Instacart caps with no loss to return. The two conditions are an empirical association, not a proven characterization.}
\label{fig:regime}
\end{figure}

\paragraph{Mechanism: cap sweep and inversion regime (\cref{fig:one}B, \cref{fig:regime}).} Re-training the weighted configuration with \texttt{max\_delta\_step}$=c\in\{0.3,0.7,1,2,5\}$ ($T=\ncapT$, $\eta=\ncapEta$) bounds the realized shift by $T\eta c$, from $\nSmdsPointThreeTEtaC$ to $\nSmdsFiveTEtaC$ nat, against $\ln w_j$ up to $\ncapLnwMax$ on Santander and $\nIcapLnwMax$ on Instacart. \Cref{prop:budget} survives every cell of the sweep: the largest realized shift stays inside the budget at all \ncapN{} caps on both datasets, tightly at the smallest cap ($\nSmdsPointThreebPriorMax$ against $\nSmdsPointThreeTEtaC$ on Santander) and with room to spare at the largest ($\nSmdsFivebPriorMax$ against $\nSmdsFiveTEtaC$), and no cell sits at exactly $1.0$ under any cap. The budget's falsifiable content is the over-correction it forces on the inversion, whose median $\ln w_j-b_j$ falls from $\nSmdsPointThreeoverMedian$ nat at $c=0.3$ to $\nSmdsFiveoverMedian$ nat at $c=5$ on Santander (\Supp{S7}). \Cref{fig:regime} says when the analytic inversion works: the inversion pays only for the two models that realized the shift \emph{and} saturated nothing (Santander $c=5$, raw $\nSmdsFiveraw$ to $\nSmdsFiveelkan$ against $\nNraw$ for \Nm; the Instacart MLP below), it over-corrects where the cap starved the shift and barely moves the uncapped models, whose ties no separable map reaches. The repair does not travel down the sweep unchanged either: per-label isotonic regression with the prior fallback is the best rung at all \nInCapsIsoBest{} Instacart caps and at \nnCapsIsoBest{} of \nnCapsRungs{} Santander caps, but at $c=0.3$ it is the worst rung of that model ($\nSmdsPointThreeplIsoPrior$ against $\nSmdsPointThreeraw$ for doing nothing), because the cap compresses the scores until the per-label map's own in-sample reference ($\nSmdsPointThreeCeil$) sits below raw --- the scarcity that the selection rule of \cref{sec:discussion} exists to catch. At $c=5$ the capped, calibrated weighted arm scores $\nSmdsFiveplIsoPrior$, above the uncapped unweighted model under the same calibrator ($\nNplIsoPrior$), but an unweighted model trained with the same cap scores $\nSmdsFiveNraw$ raw and $\nSmdsFiveNplIsoPrior$ calibrated, level with it, and at the other \ncapNtwinsAhead{} caps the unweighted twin is ahead, by up to $\ncapTwinGapMax$ at $c=0.3$ (\Supp{S7}): the gain is the cap's, not the weight's.

\paragraph{Two learners.} Where it is identified at all, the realized shift of the two learners agrees: the booster's median $b_j/\ln w_j$ is $\nSbOverLnwMedian$ on Santander (\nSnIdentShift{} labels) and $\nISbOverLnwMedian$ on Instacart (\nISnIdentShift), the Instacart MLP's is $\nIMSbOverLnwMedian$ (\nIMSnIdentShift), and a clipping-free estimator (\Supp{S7}) gives $\nSbDirectOverLnwMedian$ and $\nISbDirectOverLnwMedian$ for the two boosters on a larger label set, so an uncapped learner's median lies between half of the intended shift and $\nIMSbOverLnwMedian$ times it. What differs between the pairs is how much of the label set retains a shift (\nMSnIdentShift{} of \nnLabels{} labels on the Santander MLP against \nIMSnIdentShift{} of \nInLabels{} for the Instacart MLP). The \emph{same} ideal shift costs the two Santander pairs very differently ($\nMNraw\to\nMwhatifIdeal$ for the MLP against $\nNraw\to\nwhatifIdeal$ for LightGBM, \nModdsShare\% of the loss against \noddsShare\%): the share measures how large $\ln w_j$ is against the unweighted model's within-row logit gaps. Downstream everything repeats (\nMSsatOne\% of cells at exactly $1.0$, the inversion returns \nMelkanShare\%, the bold rung ($\nMSplIsoPrior$) lands on its ceiling ($\nMSceil$), \nMtieShare\% of the loss below the unweighted MLP's raw score). Instacart adds the case the recipe has to cover: the \emph{unweighted} MLP never learns a within-row signal (AUC $\nIMNauc$, score $\nIMNraw$, the popularity baseline to three decimals), the weighted MLP does (AUC $\nIMSauc$) and pays for it in the cross-label scale ($\nIMSraw$ raw). This is the one pair that meets both conditions of \cref{fig:regime} --- realized shift at the intended level, saturation negligible --- and the inversion accordingly works, taking it from $\nIMSraw$ to $\nIMSelkan$, past the unweighted twin (the all-product MLP meets the same conditions and inverts too, \Supp{S7}); per-label isotonic regression does better still ($\nIMSplIsoPrior$). Here the weight made the optimization move and was ruinous for the ranking until the marginals were restored.

\paragraph{The fallback decides the sign.} With the identity fallback, per-label isotonic regression yields $\nSplIsoId$ on \Sm (below the popularity baseline $\npopMAP$) and lowers \Wm from $\nWraw$ to $\nWplIsoId$: the \nSnDead{} dead labels keep their raw scores (mean $\nWdeadRawMean$ on \Wm) while every calibrated label shrinks to its prevalence (largest mean $\nWaliveCalMax$), so all three sit in the top 7 of \nWdeadAllRowsPct\% of rows, the takeover of \cref{prop:dead}. Since scarce calibration positives hurt even with the prior fallback (\cref{sec:dose}), the two effects must be kept apart. With the prior fallback the same calibrator gives $\nSplIsoPrior$ on \Sm, $\nWplIsoPrior$ on \Wm ($\nciWprior$) and $\nNplIsoPrior$ on \Nm ($\nciNprior$), each within $0.01$ of its ceiling; exclusion behaves like the prior. On the all-product Instacart MLP the policy decides everything: \nIMASnDead{} of \nIMAnLabels{} products have no calibration positive, the identity fallback returns the weighted model exactly to its raw score ($\nIMASplIsoId$) and the prior fallback repairs it to $\nIMASplIsoPrior$ against a popularity baseline of $\nIMApopMAP$ (\Supp{S7}).

\paragraph{Robustness.} Under the deployment protocol \Sm goes from $\ndepSraw$ to $\ndepSplIsoPrior$; the two protocols differ on the bold rung by at most $\nprotoMaxDiff$ (Santander, \Supp{Table~S3}) and $\nIprotoMaxDiffPair$ (Instacart, \Supp{S7}), with the order of the ladder unchanged. Retraining both LightGBM pairs on \nseedN{} independent 90\% draws of the training rows moves only the depth of the weighted Santander collapse ($\nseedSraw$ raw) and the two rungs that read that depth; the repaired score ($\nseedSplIsoPrior$) and the odds share ($\nseedOddsShare\%$) are stable, and Instacart is stable throughout (\Supp{S7}).
\section{Dose-Response on Public Benchmarks}
\label{sec:dose}

\begin{figure*}[t]
\centering
\includegraphics[width=.86\textwidth]{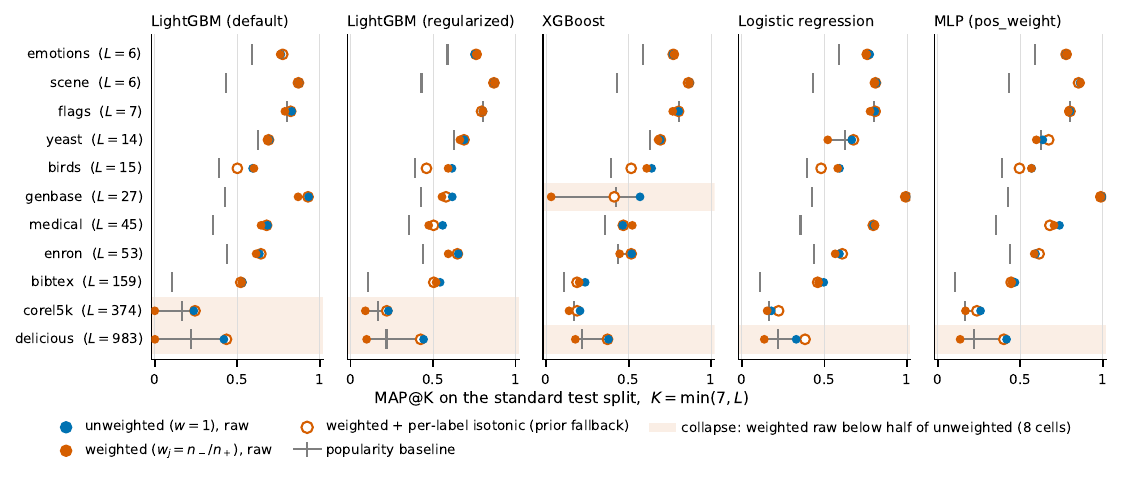}
\Description{Five panels, one per learner, each with eleven rows of MULAN datasets sorted by label count. In each row a blue dot marks the unweighted model, an orange dot the weighted model and an open orange circle the weighted model after per-label isotonic calibration; a grey tick marks the popularity baseline. Rows where the weighted dot falls below half of the blue dot are shaded: all five learners on delicious, two on Corel5k, XGBoost on genbase.}
\caption{\textbf{On public data the recommended weight is benign or mildly harmful on most datasets and collapses the ranking where the labels are many; on delicious and Corel5k per-label calibration with the prior fallback returns the collapsed models to the unweighted level.} $\mapk$ on the standard test split of \nnMulan{} MULAN datasets~\cite{Tsoumakas2011MULAN} (rows, sorted by label count $L$; $K=\min(7,L)$) for \nnLearners{} one-vs-rest learners (panels), each trained without weights ($w{=}1$, filled blue) and with the recommended weight ($w_j=n_{-,j}/n_{+,j}$, filled orange); the open circle is the weighted model after per-label isotonic regression with the prior fallback, and the grey tick the popularity baseline. Shaded rows are the \nnCollapseCells{} of \nnDoseCells{} cells in which the weighted model keeps less than half of its unweighted $\mapk$ (a threshold chosen for the figure; at $0.3$, $0.7$ and $0.9$ the count is \nnCollapseCellsAtPointThree, \nnCollapseCellsAtPointSeven{} and \nnCollapseCellsAtPointNine): all \nnDelCollapseLearners{} learners on delicious ($L=\ndelL$), \nnCorelCollapseLearners{} on Corel5k ($L=\ncorelL$), \nnOtherCollapseCells{} on genbase. Calibration split: 30\% of the training file (seed 42); \Supp{S6} has every cell and four weight magnitudes.}
\label{fig:mulan}
\end{figure*}

\begin{table*}[t]
\caption{\textbf{At moderate imbalance the sign of the weight's effect is dataset-dependent; where the labels are many (bottom two rows) the weighted model collapses and per-label calibration returns it to the unweighted level.} Default LightGBM one-vs-rest on \nnMulan{} MULAN datasets, $\mapk$ with $K=\min(7,L)$ on the standard test split. $n_{\mathrm{fit}}$: training rows after removing the calibration split (30\% of the training file, at least 50 rows, seed 42); ``cal.\ pos.'': median calibration positives per label; ``pop.'': ranking by training prevalence. Unweighted model ($w{=}1$): raw and per-label isotonic with prior fallback. Weighted model ($w_j=n_-/n_+$): raw, what-if prediction (unweighted logits $+\ln w_j$), analytic inversion, per-label isotonic with prior fallback, in-sample ceiling. Last columns: mean number of distinct scores per label. $\min r<1$ marks a majority-positive label; a label with no positive in the fit split is recorded as $r=1$ (genbase, medical, Corel5k).}
\label{tab:mulan}
\footnotesize
\setlength{\tabcolsep}{3pt}
\begin{tabular}{@{}lrrrrrrrr|cc|ccccc|cc@{}}
\toprule
 & & & & & & & & & \multicolumn{2}{c|}{$w{=}1$} & \multicolumn{5}{c|}{$w{=}r$} & \multicolumn{2}{c}{distinct} \\
dataset & $L$ & $K$ & $n_{\mathrm{fit}}$ & $n_{\mathrm{test}}$ & cal.\ pos. & $\min r$ & $\max r$ & pop. & raw & pl-iso & raw & what-if & inversion & pl-iso & ceiling & $w{=}1$ & $w{=}r$ \\
\midrule
emotions & 6 & 6 & 274 & 202 & 32 & 1.5 & 3 & 0.588 & 0.764 & 0.776 & 0.760 & 0.757 & 0.763 & 0.775 & 0.805 & 202 & 202 \\
scene & 6 & 6 & 848 & 1196 & 60 & 3.5 & 6 & 0.430 & 0.874 & 0.873 & 0.873 & 0.877 & 0.871 & 0.870 & 0.883 & 1195 & 1195 \\
flags & 7 & 7 & 79 & 65 & 22 & 0.3 & 6 & 0.802 & 0.829 & 0.805 & 0.788 & 0.762 & 0.838 & 0.823 & 0.870 & 58 & 59 \\
yeast & 14 & 7 & 1050 & 917 & 126 & 0.3 & 80 & 0.626 & 0.694 & 0.695 & 0.688 & 0.683 & 0.691 & 0.689 & 0.704 & 917 & 917 \\
birds & 15 & 7 & 226 & 323 & 5 & 7.7 & 112 & 0.392 & 0.593 & 0.549 & 0.601 & 0.593 & 0.593 & 0.500 & 0.675 & 322 & 321 \\
genbase & 27 & 7 & 325 & 199 & 5 & 1.0 & 324 & 0.425 & 0.932 & 0.946 & 0.868 & 0.927 & 0.876 & 0.928 & 0.952 & 69 & 51 \\
medical & 45 & 7 & 234 & 645 & 1 & 1.0 & 233 & 0.355 & 0.683 & 0.651 & 0.644 & 0.653 & 0.674 & 0.677 & 0.764 & 261 & 205 \\
enron & 53 & 7 & 787 & 579 & 6 & 0.9 & 786 & 0.437 & 0.634 & 0.642 & 0.614 & 0.596 & 0.636 & 0.643 & 0.680 & 510 & 496 \\
bibtex & 159 & 7 & 3416 & 2515 & 17 & 5.8 & 227 & 0.107 & 0.531 & 0.530 & 0.527 & 0.537 & 0.521 & 0.521 & 0.580 & 2327 & 2349 \\
corel5k & 374 & 7 & 3150 & 500 & 4 & 1.0 & 3149 & 0.166 & 0.236 & 0.241 & 0.000 & 0.164 & 0.001 & 0.244 & 0.314 & 276 & 244 \\
delicious & 983 & 7 & 9044 & 3185 & 26 & 1.5 & 903 & 0.218 & 0.419 & 0.434 & 0.001 & 0.237 & 0.003 & 0.434 & 0.482 & 2576 & 2135 \\

\bottomrule
\end{tabular}
\end{table*}

\paragraph{MULAN benchmarks (\cref{fig:mulan}, \cref{tab:mulan}).} We retrain one-vs-rest models on \nnMulan{} MULAN datasets~\cite{Tsoumakas2011MULAN} (every standard-split dataset in the distribution except the two largest, tmc2007 and Corel16k) with $w_j\in\{1,\sqrt{r_j},r_j,10r_j\}$ and \nnLearners{} learners: LightGBM in a default and a strongly regularized configuration, XGBoost, $\ell_2$-regularized logistic regression with sample weights, and a two-layer MLP with per-label \texttt{pos\_weight}. The largest weight the sweep reaches is $\nnMulanMaxrMax$ at $w{=}r$, three orders of magnitude below Santander's; it is a dose-response over imbalance magnitude that also covers the label counts of \cref{sec:santander}. Four regularities hold.

(i) \emph{The collapse reproduces on public data.} At $w{=}r$ the raw score falls below half its unweighted value in \nnCollapseCells{} of \nnDoseCells{} learner cells, on \ncollapseDatasets{} alone. With the default LightGBM delicious goes from $\ndelRawWone$ to $\ndelRawWr$ and Corel5k from $\ncorelRawWone$ to $\ncorelRawWr$, both below their popularity baselines --- the shape of \cref{tab:ladder} on a public benchmark. The loss is between labels rather than within them (within-label AUC moves only from $\ndelAucWone$ to $\ndelAucWr$ on delicious), and the repair follows the same pattern: per-label isotonic regression with the prior fallback reaches $\ndelPlIsoPriorWr$ and $\ncorelPlIsoPriorWr$, at or above what the \emph{unweighted} model attains under the same calibrator, while a shared map reaches only $\ndelPooledWr$ and $\ncorelPooledWr$ and the inversion only $\ndelInvWr$ and $\ncorelInvWr$. Over all \nnRepairCollapseCells{} learner cells of delicious and Corel5k the repaired weighted model is within $\nmaxRepairGapCollapse$ of its unweighted arm and the shared map is never within $\nminPooledGapCollapse$ of it. Corel5k also carries the dead-label half: \ncorelDeadCalDose{} of its \ncorelL{} labels have no calibration positives, and the identity fallback gives $\ncorelPlIsoIdWr$ against the prior's $\ncorelPlIsoPriorWr$ on the default LightGBM --- \cref{prop:dead} on a public benchmark, though the gap runs down to $\ncorelIdGapMin$ on other learners. We have no threshold to offer: enron has the weight spread ($\max r=\nenronMaxr$) and bibtex the label count ($L=\nbibtexL$) without collapsing.

(ii) \emph{Elsewhere the sign is learner-dependent.} On the \nnMulanModest{} datasets with $\max r<100$ the default LightGBM change at $w{=}r$ is a fall of between $\nmodestDefaultBest$ and $\nmodestDefaultWorst$, while across the \nnLearners{} learners the sign varies and the spread widens to $\nmodestAnyWorst$ and $\nmodestAnyBest$. The default-learner reading is in line with the report that class weighting leaves calibration nearly intact at imbalance ratios below 70~\cite{Liu2026Resampling}.

(iii) \emph{The what-if prediction is close where nothing saturates.} It is within $0.03$ of the trained weighted model on \nnWhatifClose{} of \nnMulan{} datasets; it under-predicts the loss where the trained model saturates (genbase: $\ngenbaseDefaultWhatifWr$ predicted, $\ngenbaseDefaultRawWr$ trained) and over-predicts it where regularization shrinks the realized shift (enron, strongly regularized), where the inversion then recovers the unweighted ranking ($\nenronStrongInvWr$ against $\nenronStrongRawWone$ at $w{=}1$) --- the small-data counterpart of \cref{fig:regime}.

(iv) \emph{Ties can arrive without saturation.} On medical, strongly regularized, the distinct scores per label fall from $\nmedStrongDistinctWone$ to $\nmedStrongDistinctWtenr$ between $w{=}1$ and $10r$ with $\nmedStrongSatWtenr$\% of cells saturated: on data this small the ties that defeat a separable map come from coarse leaves. Leaf-saturation checks and a synthetic study with known marginals (\Supp{Tables~S8--S10}) agree: a realizable learner inverts exactly, LightGBM fails where its cells saturate.

\paragraph{The negative result and the selection rule.} With few calibration positives, per-label isotonic regression with the prior fallback \emph{hurts} an unweighted model. With the default LightGBM at $w{=}1$ it lowers $\mapk$ by more than $0.005$ on \nnHurtDefault{} of \nnMulan{} datasets, and on \nnHurtTauZeroCI{} of them the paired bootstrap interval excludes zero (birds $\nbirdsRawWone\to\nbirdsPlIsoPriorWone$ with a median of \ncalPosMedBirds{} calibration positives per label; medical $\nmedRawWone\to\nmedPlIsoPriorWone$ with \ncalPosMedMedical); the median number of calibration positives does not separate the cases (genbase and enron gain with \ncalPosMedGenbase{} and \ncalPosMedEnron), so no fixed threshold can be recommended. The recipe's answer is a rule fixed on the calibration split alone: \ntauFolds-fold cross-validation chooses among per-label isotonic regression with positives threshold $\tau\in\{\ntauList\}$ (labels with at most $\tau$ calibration positives keep the prior), the shared map, and no calibration (\Supp{Table~S6}). For the default LightGBM at $w{=}1$ the rule keeps the raw scores on birds, medical and flags, keeps the per-label gains on genbase and enron, loses to raw with an interval excluding zero on \nnHurtAfterTau{} of \nnMulan{} datasets, and gains with one on delicious ($\ndelCiTauSel$). It is not infallible: across all \nnTauCells{} cells the selection ends more than $0.01$ below raw in \nnTauSelBelowRaw{} (against \nnTauZeroBelowRaw{} for the fixed $\tau=0$ rule); in its worst cell ($\ntauWorstSelMinusRaw$) raw and the selected candidate are indistinguishable out of fold yet differ on the test file (\Supp{S2}).

\section{Related Work}
\label{sec:related}

\paragraph{The odds shift, and what a learner does with it.} Elkan's identity~\cite{Elkan2001} and the prior-shift correction of~\cite{Saerens2002} describe a positive-class weight as a shift of the log-odds by $\ln w$; logit adjustment applies that shift post hoc in long-tailed classification~\cite{Menon2021LogitAdjust}, label-shift adaptation estimates the shift from a black-box predictor~\cite{Lipton2018LabelShift} or combines it with calibration~\cite{Alexandari2020MLLS}, of which our prevalence-matching shift is the per-label special case, and the inversion the identity suggests is the first repair a practitioner reaches for; Caplin et al.~\cite{Caplin2022ClassWeights} already observe that it fails on a trained binary classifier and recover probabilities through a model of the learner, whereas we measure the realized shift directly and bound it under a cap. The identity describes the population minimizer. Importance weights lose their effect on learners that can fit the weighted objective~\cite{Byrd2019}; cost-sensitive boosting recovers the posterior ratio only near the decision boundary~\cite{MasnadiShirazi2011CSBoost}, is matched by calibrating an unweighted booster and thresholding~\cite{Nikolaou2016CSBoost}, and depends on the model being well specified~\cite{Dmochowski2010CostSensitive}, all for one binary decision rather than a ranking across labels; probability estimates are unreliable under imbalance~\cite{Wallace2012} and under its corrections~\cite{VandenGoorbergh2022}; a binary tree-ensemble study at imbalance ratios up to 70 finds, in a class-weight control, that class weighting, unlike resampling, barely moves calibration~\cite{Liu2026Resampling}; it measures calibration rather than a cross-label ranking, and our default-learner results at $\max r<100$ are consistent with it while other learners are not. Long-tailed recognition adjusts the classifier after unweighted training~\cite{Kang2020Decoupling,Menon2021LogitAdjust} within one softmax; we rank labels across independent one-vs-rest models, where the per-label weight is the distortion. Elkan already notes that rebalancing barely changes a decision tree, an early qualitative form of slippage; none of these measures how far a gradient booster~\cite{Friedman2001GBM} realizes the shift when $w_j$ is in the thousands.

\paragraph{Consistency and top-$K$.} One-vs-rest surrogates are consistent for the rank loss~\cite{Dembczynski2012Univariate} and, with a proper loss, for precision@$k$~\cite{Menon2019Reductions,Wydmuch2018PLT}; propensity-scored metrics weight labels deliberately~\cite{Jain2016XMLLoss}, which \cref{cor:odds}(c) shows training weights do implicitly.

\paragraph{Calibration versus ranking.} Post-hoc calibration~\cite{Platt1999,Zadrozny2002,NiculescuMizil2005,Kull2017Beta,Guo2017Calibration} optimizes a probability metric, and its effect on a ranking depends on whether the map is shared or class-wise: shared class-wise binning recovers the top-5 accuracy that independent per-class maps lose on ImageNet~\cite{Patel2021IMax}, intra-order-preserving maps are rank-preserving by construction~\cite{Rahimi2020IntraOrder}, and in extreme multi-label classification a shared monotone map keeps precision@$k$ untouched while per-label calibration is deferred to future work~\cite{Ullah2025XMLCCalib}; label-wise monotone reranking after calibration~\cite{Tae2026MRP} targets the residual reliability of fixed decisions, not the cross-label scale that a training weight distorts, and has no policy for a label without calibration positives. Per-class residual offsets are Bayes-optimal for reranking when the distortion is class-separable~\cite{Wang2026BeyondLA}, the case our intercept-only rung tests, but that residual is defined on a base model's shortlist and is not traced to a training weight, whereas slippage is the gap between an intended $\ln w_j$ and what the learner realized, bounded by the step budget and broken by saturation; recommender systems calibrate within rank groups~\cite{Sato2024TopN}; automated calibrator selection~\cite{Abdelrahman2025SmartCal} chooses by a probability metric rather than by the deployed ranking metric.

\paragraph{Next-basket recommendation.} Next-basket recommendation scores every item from the ordered history of prior baskets and cuts the list at a fixed $K$~\cite{Yu2016DREAM}; its benchmark canon~\cite{Li2023NBRRealityCheck} and the models it compares~\cite{Hu2020TIFUKNN,Ariannezhad2022ReCANet} use the Instacart 2017 data~\cite{Instacart2017} in this form, with the last basket held out and frequency rankers as the reference. Our Instacart experiments follow that formulation but ask what per-label weights do to a fixed one-vs-rest learner, not which model predicts the next basket best; the popularity row of \cref{tab:ladder} is that literature's frequency baseline.

\section{Discussion}
\label{sec:discussion}

\paragraph{Recipe.} Five steps, in order, for a one-vs-rest ranker over many rare labels.
\begin{enumerate}[leftmargin=*,nosep]
\item \textbf{Estimate marginals with an unweighted proper loss.} If imbalance makes optimization hard, prefer a weight common to all labels, which is rank-neutral by \cref{cor:odds}(a).
\item \textbf{If label-specific weights are kept, cap the leaf step.} A cap removes the saturation ties, which nothing post hoc can repair, at the price of leaving the intended shift unrealized: the inversion is then wrong in the other direction and calibration is what fixes the ranking (\cref{fig:regime}). A small cap compresses the scores until no per-label map recovers the cross-label order ($c=0.3$ above), which is why step 5 chooses the calibrator rather than fixing it.
\item \textbf{Calibrate on a held-out split from the training period}, never from the evaluation period.
\item \textbf{Send labels without calibration positives to their prior, not through the identity.} This choice, not the threshold or the split, decides whether the repair helps or hurts; where positives are merely scarce no $\tau$ makes a per-label map safe (birds, medical).
\item \textbf{Choose $\tau$, per-label versus shared, and whether to calibrate at all, by cross-validation inside the calibration split} (\Supp{Table~S6}): the objective is the deployed metric and the rule never touches test labels. Where that split does not predict the test data, no calibration is the safe default for an unweighted model; the shared map is not (more than $0.01$ below raw in \nnSharedBelowRaw{} of \nnTauCells{} cells).
\end{enumerate}
Where the pipeline allows it, keep the raw margin rather than the stored probability: it preserves the order inside the saturated cells, though not the cross-label scale. Without labels, the share of cells at exactly $1.0$ and the number of distinct scores per label flag saturation and predict that the inversion will fail.

\paragraph{Limitations.}
\begin{itemize}[leftmargin=*,nosep]
\item \emph{Scope of the large-scale evidence.} Two recommendation benchmarks with \nnLabels{} and \nInLabels{} labels, two learner families and one headline weight regime ($w_j=n_-/n_+$ up to \nwMax); the $\sqrt{r}$ and $10r$ arms and the capped unweighted twins exist on Santander only, and $\mapk$ is the only metric reported (recall and NDCG at $K$, the next-basket canon, are not). MULAN replicates the collapse but not its scale (largest $r=\nnMulanMaxrMax$; Corel5k and delicious have \ncorelNtestDose{} and \ndelNtestDose{} test rows) and does not locate the boundary; the two largest MULAN datasets were not run.
\item \emph{Only the step budget is tight.} Without a cap we bound nothing; the realized shift is measurable only where no cell saturates, and the two estimators of the shift disagree on how much of it is realized ($\nSbDirectOverLnwMedian$ against $\nSbOverLnwMedian$ of $\ln w_j$ on Santander, $\nISbDirectOverLnwMedian$ against $\nISbOverLnwMedian$ on Instacart) because they read different label sets. The two conditions of \cref{fig:regime} are an empirical association, not a proven characterization.
\item \emph{Single runs and a fixed configuration.} Each matched pair is a single run. Retraining the Santander pair on a second machine with the same data, script, configuration and LightGBM version did not reproduce it bit for bit: the weighted model scored $\nSSretrainRaw$ raw against $\nSraw$, the unweighted one $\nSNretrainRaw$ against $\nNraw$ (\Supp{S7}). Together with the \nseedN{} draws ($\nseedSraw$) this says that the depth of the weighted Santander collapse is run-dependent while every other rung is stable; the canonical arrays are archived with the release. The MLP pairs are single stochastic runs at one width and one epoch budget, evaluated under the test-split protocol only; the LightGBM configuration was fixed on Santander and not tuned for Instacart.
\item \emph{What the ceiling is, and where the selection rule fails.} The oracle ceiling is the best per-label isotonic fit in squared loss read off at $\mapk$, not a bound on all post-hoc procedures; row-dependent calibrators~\cite{Rahimi2020IntraOrder} are outside the separable class. No threshold in calibration positives separates harmful from harmless per-label calibration, and the cross-validated choice ends more than $0.01$ below raw in \nnTauSelBelowRaw{} of \nnTauCells{} cells, where the calibration split does not predict the test file. Its grid ends at $\tau=\ntauMax$, which the rule picks in \nnTauChoiceMaxTau{} of its \nnTauChoicePerLabel{} per-label choices; a wider grid was not run. The prior fallback is a floor for stationary catalogues: a label with test positives but no calibration positives is sent to the bottom of every row and the rule cannot see the loss; on Santander all \nnDeadCalAlsoTest{} dead labels are also dead in the test split.
\end{itemize}

\paragraph{Conclusion.} Per-label class weights change the target of one-vs-rest training from the marginal to a weighted odds, and on two recommendation benchmarks the textbook odds shift is the minor part of the resulting top-$K$ loss for the boosted pairs (\noddsShare\% on Santander, \nIoddsShare\% on Instacart); the major part is odds-shift slippage, a realized shift that differs from $\ln w_j$ plus ties from saturation, which the analytic inversion cannot undo. What varies across learners and datasets is how much of the label set saturates, so the repair is what carries across learners and the inversion is not. Per-label calibration with a prior fallback supplies it up to the residue the ties impose, and the fallback, not the calibrator, decides its sign. Where the weight is not needed for optimization, a label-specific weight bought nothing for the ranking on our benchmarks; where it is, it should be followed by per-label calibration before the top-$K$ decision.

\section*{Reproducibility}
All scores, calibrators, tables and figures are produced by the released experiment and publication sources (\url{https://github.com/souldrive7/odds-shift-slippage}) from fixed prediction arrays (SHA-256 recorded in the result files; regenerated from the public Kaggle data by the released training scripts) and the public MULAN ARFF files (content hashes recorded). Every run's configuration and seed are recorded, the dead-label policy is a named argument of every calibrator, and the supplement is an ancillary file of this preprint.

\section*{Acknowledgments}
The author used Claude (Anthropic) and Codex (OpenAI) as coding, language-editing and manuscript-review tools; all analyses, claims, numbers and citations were produced or verified by the author, who is solely responsible for the content.

\bibliographystyle{ACM-Reference-Format}
\bibliography{refs}


\begin{thebibliography}{44}


\ifx \showCODEN    \undefined \def \showCODEN     #1{\unskip}     \fi
\ifx \showISBNx    \undefined \def \showISBNx     #1{\unskip}     \fi
\ifx \showISBNxiii \undefined \def \showISBNxiii  #1{\unskip}     \fi
\ifx \showISSN     \undefined \def \showISSN      #1{\unskip}     \fi
\ifx \showLCCN     \undefined \def \showLCCN      #1{\unskip}     \fi
\ifx \shownote     \undefined \def \shownote      #1{#1}          \fi
\ifx \showarticletitle \undefined \def \showarticletitle #1{#1}   \fi
\ifx \showURL      \undefined \def \showURL       {\relax}        \fi
\providecommand\bibfield[2]{#2}
\providecommand\bibinfo[2]{#2}
\providecommand\natexlab[1]{#1}
\providecommand\showeprint[2][]{arXiv:#2}
\makeatletter
\@ifundefined{NAT@parse@date}{}{\let\NAT@parse@date@orig\NAT@parse@date}
\@ifundefined{NAT@parse@date}{}{\def\NAT@parse@date#1#2#3#4#5#6@@{\NAT@parse@date@orig#1#2#3#4#5#6@@\def\NAT@tempyear{0000}\def\NAT@tempexlab{{?}}\ifx\NAT@year\NAT@tempyear\ifx\NAT@exlab\NAT@tempexlab\def\NAT@date{[n.\,d.]}\else\edef\NAT@date{[n.\,d.]\NAT@exlab}\fi\fi}}
\makeatother

\bibitem[Abdelrahman et~al\mbox{.}(2025)]%
        {Abdelrahman2025SmartCal}
\bibfield{author}{\bibinfo{person}{Mohamed~Maher Abdelrahman},
  \bibinfo{person}{Osama~Fayez Oun}, \bibinfo{person}{Youssef Medhat},
  \bibinfo{person}{Mariam~Magdy Elseedawy}, \bibinfo{person}{Yara~Mostafa
  Marei}, \bibinfo{person}{Abdullah Ibrahim}, {and}
  \bibinfo{person}{Radwa~Mohamed El~Shawi}.} \bibinfo{year}{2025}\natexlab{}.
\newblock \showarticletitle{SmartCal: A Novel Automated Approach to Classifier
  Probability Calibration}. In \bibinfo{booktitle}{\emph{Proceedings of the
  Fourth International Conference on Automated Machine Learning (AutoML)}}
  \emph{(\bibinfo{series}{Proceedings of Machine Learning Research},
  Vol.~\bibinfo{volume}{293})}. \bibinfo{pages}{1--14}.
\newblock


\bibitem[Alexandari et~al\mbox{.}(2020)]%
        {Alexandari2020MLLS}
\bibfield{author}{\bibinfo{person}{Amr Alexandari}, \bibinfo{person}{Anshul
  Kundaje}, {and} \bibinfo{person}{Avanti Shrikumar}.}
  \bibinfo{year}{2020}\natexlab{}.
\newblock \showarticletitle{Maximum Likelihood with Bias-Corrected Calibration
  is Hard-To-Beat at Label Shift Adaptation}. In
  \bibinfo{booktitle}{\emph{Proceedings of the 37th International Conference on
  Machine Learning (ICML)}} \emph{(\bibinfo{series}{Proceedings of Machine
  Learning Research}, Vol.~\bibinfo{volume}{119})}. \bibinfo{publisher}{PMLR},
  \bibinfo{pages}{222--232}.
\newblock
\urldef\tempurl%
\url{https://proceedings.mlr.press/v119/alexandari20a.html}
\showURL{%
\tempurl}


\bibitem[Ariannezhad et~al\mbox{.}(2022)]%
        {Ariannezhad2022ReCANet}
\bibfield{author}{\bibinfo{person}{Mozhdeh Ariannezhad}, \bibinfo{person}{Sami
  Jullien}, \bibinfo{person}{Ming Li}, \bibinfo{person}{Min Fang},
  \bibinfo{person}{Sebastian Schelter}, {and} \bibinfo{person}{Maarten de
  Rijke}.} \bibinfo{year}{2022}\natexlab{}.
\newblock \showarticletitle{{ReCANet}: A Repeat Consumption-Aware Neural
  Network for Next Basket Recommendation in Grocery Shopping}. In
  \bibinfo{booktitle}{\emph{Proceedings of the 45th International ACM SIGIR
  Conference on Research and Development in Information Retrieval (SIGIR
  '22)}}. \bibinfo{publisher}{ACM}, \bibinfo{address}{Madrid, Spain},
  \bibinfo{pages}{1240--1250}.
\newblock
\href{https://doi.org/10.1145/3477495.3531708}{doi:\nolinkurl{10.1145/3477495.3531708}}


\bibitem[Byrd and Lipton(2019)]%
        {Byrd2019}
\bibfield{author}{\bibinfo{person}{Jonathon Byrd} {and}
  \bibinfo{person}{Zachary~C. Lipton}.} \bibinfo{year}{2019}\natexlab{}.
\newblock \showarticletitle{What is the Effect of Importance Weighting in Deep
  Learning?}. In \bibinfo{booktitle}{\emph{Proceedings of the 36th
  International Conference on Machine Learning (ICML)}}
  \emph{(\bibinfo{series}{Proceedings of Machine Learning Research},
  Vol.~\bibinfo{volume}{97})}. \bibinfo{pages}{872--881}.
\newblock


\bibitem[Caplin et~al\mbox{.}(2022)]%
        {Caplin2022ClassWeights}
\bibfield{author}{\bibinfo{person}{Andrew Caplin}, \bibinfo{person}{Daniel
  Martin}, {and} \bibinfo{person}{Philip Marx}.}
  \bibinfo{year}{2022}\natexlab{}.
\newblock \bibinfo{title}{Calibrating for Class Weights by Modeling Machine
  Learning}.
\newblock \bibinfo{howpublished}{arXiv:2205.04613}.
\newblock


\bibitem[Chen and Guestrin(2016)]%
        {Chen2016XGBoost}
\bibfield{author}{\bibinfo{person}{Tianqi Chen} {and} \bibinfo{person}{Carlos
  Guestrin}.} \bibinfo{year}{2016}\natexlab{}.
\newblock \showarticletitle{XGBoost: A Scalable Tree Boosting System}. In
  \bibinfo{booktitle}{\emph{Proceedings of the 22nd ACM SIGKDD International
  Conference on Knowledge Discovery and Data Mining (KDD)}}.
  \bibinfo{pages}{785--794}.
\newblock
\href{https://doi.org/10.1145/2939672.2939785}{doi:\nolinkurl{10.1145/2939672.2939785}}


\bibitem[Dembczy{\'n}ski et~al\mbox{.}(2012)]%
        {Dembczynski2012Univariate}
\bibfield{author}{\bibinfo{person}{Krzysztof Dembczy{\'n}ski},
  \bibinfo{person}{Wojciech Kot{\l}owski}, {and} \bibinfo{person}{Eyke
  H{\"u}llermeier}.} \bibinfo{year}{2012}\natexlab{}.
\newblock \showarticletitle{Consistent Multilabel Ranking through Univariate
  Loss Minimization}. In \bibinfo{booktitle}{\emph{Proceedings of the 29th
  International Conference on Machine Learning (ICML)}}.
\newblock


\bibitem[Dmochowski et~al\mbox{.}(2010)]%
        {Dmochowski2010CostSensitive}
\bibfield{author}{\bibinfo{person}{Jacek~P. Dmochowski}, \bibinfo{person}{Paul
  Sajda}, {and} \bibinfo{person}{Lucas~C. Parra}.}
  \bibinfo{year}{2010}\natexlab{}.
\newblock \showarticletitle{Maximum Likelihood in Cost-Sensitive Learning:
  Model Specification, Approximations, and Upper Bounds}.
\newblock \bibinfo{journal}{\emph{Journal of Machine Learning Research}}
  \bibinfo{volume}{11} (\bibinfo{year}{2010}), \bibinfo{pages}{3313--3332}.
\newblock


\bibitem[Elkan(2001)]%
        {Elkan2001}
\bibfield{author}{\bibinfo{person}{Charles Elkan}.}
  \bibinfo{year}{2001}\natexlab{}.
\newblock \showarticletitle{The Foundations of Cost-Sensitive Learning}. In
  \bibinfo{booktitle}{\emph{Proceedings of the 17th International Joint
  Conference on Artificial Intelligence (IJCAI)}}. \bibinfo{pages}{973--978}.
\newblock


\bibitem[Friedman(2001)]%
        {Friedman2001GBM}
\bibfield{author}{\bibinfo{person}{Jerome~H. Friedman}.}
  \bibinfo{year}{2001}\natexlab{}.
\newblock \showarticletitle{Greedy Function Approximation: A Gradient Boosting
  Machine}.
\newblock \bibinfo{journal}{\emph{The Annals of Statistics}}
  \bibinfo{volume}{29}, \bibinfo{number}{5} (\bibinfo{year}{2001}),
  \bibinfo{pages}{1189--1232}.
\newblock
\href{https://doi.org/10.1214/aos/1013203451}{doi:\nolinkurl{10.1214/aos/1013203451}}


\bibitem[Guo et~al\mbox{.}(2017)]%
        {Guo2017Calibration}
\bibfield{author}{\bibinfo{person}{Chuan Guo}, \bibinfo{person}{Geoff Pleiss},
  \bibinfo{person}{Yu Sun}, {and} \bibinfo{person}{Kilian~Q. Weinberger}.}
  \bibinfo{year}{2017}\natexlab{}.
\newblock \showarticletitle{On Calibration of Modern Neural Networks}. In
  \bibinfo{booktitle}{\emph{Proceedings of the 34th International Conference on
  Machine Learning (ICML)}} \emph{(\bibinfo{series}{Proceedings of Machine
  Learning Research}, Vol.~\bibinfo{volume}{70})}. \bibinfo{pages}{1321--1330}.
\newblock


\bibitem[Hu et~al\mbox{.}(2020)]%
        {Hu2020TIFUKNN}
\bibfield{author}{\bibinfo{person}{Haoji Hu}, \bibinfo{person}{Xiangnan He},
  \bibinfo{person}{Jinyang Gao}, {and} \bibinfo{person}{Zhi-Li Zhang}.}
  \bibinfo{year}{2020}\natexlab{}.
\newblock \showarticletitle{Modeling Personalized Item Frequency Information
  for Next-basket Recommendation}. In \bibinfo{booktitle}{\emph{Proceedings of
  the 43rd International ACM SIGIR Conference on Research and Development in
  Information Retrieval (SIGIR '20)}}. \bibinfo{publisher}{ACM},
  \bibinfo{address}{Virtual Event, China}, \bibinfo{pages}{1071--1080}.
\newblock
\href{https://doi.org/10.1145/3397271.3401066}{doi:\nolinkurl{10.1145/3397271.3401066}}


\bibitem[{Instacart}(2017)]%
        {Instacart2017}
\bibfield{author}{\bibinfo{person}{{Instacart}}.}
  \bibinfo{year}{2017}\natexlab{}.
\newblock \bibinfo{title}{The {Instacart} Online Grocery Shopping Dataset
  2017}.
\newblock \bibinfo{howpublished}{Kaggle competition ``Instacart Market Basket
  Analysis'', \url{https://www.kaggle.com/c/instacart-market-basket-analysis}}.
\newblock


\bibitem[Jain et~al\mbox{.}(2016)]%
        {Jain2016XMLLoss}
\bibfield{author}{\bibinfo{person}{Himanshu Jain}, \bibinfo{person}{Yashoteja
  Prabhu}, {and} \bibinfo{person}{Manik Varma}.}
  \bibinfo{year}{2016}\natexlab{}.
\newblock \showarticletitle{Extreme Multi-label Loss Functions for
  Recommendation, Tagging, Ranking \& Other Missing Label Applications}. In
  \bibinfo{booktitle}{\emph{Proceedings of the 22nd ACM SIGKDD International
  Conference on Knowledge Discovery and Data Mining (KDD)}}.
  \bibinfo{pages}{935--944}.
\newblock
\href{https://doi.org/10.1145/2939672.2939756}{doi:\nolinkurl{10.1145/2939672.2939756}}


\bibitem[{Kaggle and Banco Santander}(2016)]%
        {KaggleSantander2016}
\bibfield{author}{\bibinfo{person}{{Kaggle and Banco Santander}}.}
  \bibinfo{year}{2016}\natexlab{}.
\newblock \bibinfo{title}{Santander Product Recommendation}.
\newblock
  \bibinfo{howpublished}{\url{https://www.kaggle.com/c/santander-product-recommendation}}.
\newblock


\bibitem[Kang et~al\mbox{.}(2020)]%
        {Kang2020Decoupling}
\bibfield{author}{\bibinfo{person}{Bingyi Kang}, \bibinfo{person}{Saining Xie},
  \bibinfo{person}{Marcus Rohrbach}, \bibinfo{person}{Zhicheng Yan},
  \bibinfo{person}{Albert Gordo}, \bibinfo{person}{Jiashi Feng}, {and}
  \bibinfo{person}{Yannis Kalantidis}.} \bibinfo{year}{2020}\natexlab{}.
\newblock \showarticletitle{Decoupling Representation and Classifier for
  Long-Tailed Recognition}. In \bibinfo{booktitle}{\emph{International
  Conference on Learning Representations (ICLR)}}.
\newblock
\urldef\tempurl%
\url{https://openreview.net/forum?id=r1gRTCVFvB}
\showURL{%
\tempurl}


\bibitem[Ke et~al\mbox{.}(2017)]%
        {Ke2017LightGBM}
\bibfield{author}{\bibinfo{person}{Guolin Ke}, \bibinfo{person}{Qi Meng},
  \bibinfo{person}{Thomas Finley}, \bibinfo{person}{Taifeng Wang},
  \bibinfo{person}{Wei Chen}, \bibinfo{person}{Weidong Ma},
  \bibinfo{person}{Qiwei Ye}, {and} \bibinfo{person}{Tie-Yan Liu}.}
  \bibinfo{year}{2017}\natexlab{}.
\newblock \showarticletitle{LightGBM: A Highly Efficient Gradient Boosting
  Decision Tree}. In \bibinfo{booktitle}{\emph{Advances in Neural Information
  Processing Systems 30 (NIPS)}}.
\newblock


\bibitem[Koyejo et~al\mbox{.}(2015)]%
        {Koyejo2015Consistent}
\bibfield{author}{\bibinfo{person}{Oluwasanmi~O. Koyejo},
  \bibinfo{person}{Nagarajan Natarajan}, \bibinfo{person}{Pradeep~K.
  Ravikumar}, {and} \bibinfo{person}{Inderjit~S. Dhillon}.}
  \bibinfo{year}{2015}\natexlab{}.
\newblock \showarticletitle{Consistent Multilabel Classification}. In
  \bibinfo{booktitle}{\emph{Advances in Neural Information Processing Systems
  28 (NIPS)}}.
\newblock


\bibitem[Kull et~al\mbox{.}(2017)]%
        {Kull2017Beta}
\bibfield{author}{\bibinfo{person}{Meelis Kull}, \bibinfo{person}{Telmo~M.
  Silva~Filho}, {and} \bibinfo{person}{Peter Flach}.}
  \bibinfo{year}{2017}\natexlab{}.
\newblock \showarticletitle{Beta Calibration: A Well-Founded and Easily
  Implemented Improvement on Logistic Calibration for Binary Classifiers}. In
  \bibinfo{booktitle}{\emph{Proceedings of the 20th International Conference on
  Artificial Intelligence and Statistics (AISTATS)}}
  \emph{(\bibinfo{series}{Proceedings of Machine Learning Research},
  Vol.~\bibinfo{volume}{54})}. \bibinfo{pages}{623--631}.
\newblock


\bibitem[Li et~al\mbox{.}(2023)]%
        {Li2023NBRRealityCheck}
\bibfield{author}{\bibinfo{person}{Ming Li}, \bibinfo{person}{Sami Jullien},
  \bibinfo{person}{Mozhdeh Ariannezhad}, {and} \bibinfo{person}{Maarten de
  Rijke}.} \bibinfo{year}{2023}\natexlab{}.
\newblock \showarticletitle{A Next Basket Recommendation Reality Check}.
\newblock \bibinfo{journal}{\emph{ACM Transactions on Information Systems}}
  \bibinfo{volume}{41}, \bibinfo{number}{4} (\bibinfo{year}{2023}),
  \bibinfo{pages}{116:1--116:29}.
\newblock
\href{https://doi.org/10.1145/3587153}{doi:\nolinkurl{10.1145/3587153}}


\bibitem[Lipton et~al\mbox{.}(2018)]%
        {Lipton2018LabelShift}
\bibfield{author}{\bibinfo{person}{Zachary~C. Lipton},
  \bibinfo{person}{Yu-Xiang Wang}, {and} \bibinfo{person}{Alexander Smola}.}
  \bibinfo{year}{2018}\natexlab{}.
\newblock \showarticletitle{Detecting and Correcting for Label Shift with Black
  Box Predictors}. In \bibinfo{booktitle}{\emph{Proceedings of the 35th
  International Conference on Machine Learning (ICML)}}
  \emph{(\bibinfo{series}{Proceedings of Machine Learning Research},
  Vol.~\bibinfo{volume}{80})}. \bibinfo{publisher}{PMLR},
  \bibinfo{pages}{3122--3130}.
\newblock
\urldef\tempurl%
\url{https://proceedings.mlr.press/v80/lipton18a.html}
\showURL{%
\tempurl}


\bibitem[Liu(2026)]%
        {Liu2026Resampling}
\bibfield{author}{\bibinfo{person}{Zewen Liu}.}
  \bibinfo{year}{2026}\natexlab{}.
\newblock \bibinfo{title}{The Hidden Cost of Resampling: How Imbalance
  Correction Degrades Probability Calibration in Tree Ensembles}.
\newblock \bibinfo{howpublished}{arXiv:2606.29720}.
\newblock


\bibitem[Masnadi-Shirazi and Vasconcelos(2011)]%
        {MasnadiShirazi2011CSBoost}
\bibfield{author}{\bibinfo{person}{Hamed Masnadi-Shirazi} {and}
  \bibinfo{person}{Nuno Vasconcelos}.} \bibinfo{year}{2011}\natexlab{}.
\newblock \showarticletitle{Cost-Sensitive Boosting}.
\newblock \bibinfo{journal}{\emph{IEEE Transactions on Pattern Analysis and
  Machine Intelligence}} \bibinfo{volume}{33}, \bibinfo{number}{2}
  (\bibinfo{year}{2011}), \bibinfo{pages}{294--309}.
\newblock
\href{https://doi.org/10.1109/TPAMI.2010.71}{doi:\nolinkurl{10.1109/TPAMI.2010.71}}


\bibitem[Matsubara(2024)]%
        {Matsubara2024WGBoost}
\bibfield{author}{\bibinfo{person}{Takuo Matsubara}.}
  \bibinfo{year}{2024}\natexlab{}.
\newblock \showarticletitle{Wasserstein Gradient Boosting: A Framework for
  Distribution-Valued Supervised Learning}. In
  \bibinfo{booktitle}{\emph{Advances in Neural Information Processing Systems
  37 (NeurIPS)}}.
\newblock


\bibitem[Menon et~al\mbox{.}(2021)]%
        {Menon2021LogitAdjust}
\bibfield{author}{\bibinfo{person}{Aditya~Krishna Menon},
  \bibinfo{person}{Sadeep Jayasumana}, \bibinfo{person}{Ankit~Singh Rawat},
  \bibinfo{person}{Himanshu Jain}, \bibinfo{person}{Andreas Veit}, {and}
  \bibinfo{person}{Sanjiv Kumar}.} \bibinfo{year}{2021}\natexlab{}.
\newblock \showarticletitle{Long-tail Learning via Logit Adjustment}. In
  \bibinfo{booktitle}{\emph{International Conference on Learning
  Representations (ICLR)}}.
\newblock


\bibitem[Menon et~al\mbox{.}(2019)]%
        {Menon2019Reductions}
\bibfield{author}{\bibinfo{person}{Aditya~Krishna Menon},
  \bibinfo{person}{Ankit~Singh Rawat}, \bibinfo{person}{Sashank~J. Reddi},
  {and} \bibinfo{person}{Sanjiv Kumar}.} \bibinfo{year}{2019}\natexlab{}.
\newblock \showarticletitle{Multilabel Reductions: What is My Loss
  Optimising?}. In \bibinfo{booktitle}{\emph{Advances in Neural Information
  Processing Systems 32 (NeurIPS)}}.
\newblock


\bibitem[Niculescu-Mizil and Caruana(2005)]%
        {NiculescuMizil2005}
\bibfield{author}{\bibinfo{person}{Alexandru Niculescu-Mizil} {and}
  \bibinfo{person}{Rich Caruana}.} \bibinfo{year}{2005}\natexlab{}.
\newblock \showarticletitle{Predicting Good Probabilities with Supervised
  Learning}. In \bibinfo{booktitle}{\emph{Proceedings of the 22nd International
  Conference on Machine Learning (ICML)}}. \bibinfo{pages}{625--632}.
\newblock
\href{https://doi.org/10.1145/1102351.1102430}{doi:\nolinkurl{10.1145/1102351.1102430}}


\bibitem[Nikolaou et~al\mbox{.}(2016)]%
        {Nikolaou2016CSBoost}
\bibfield{author}{\bibinfo{person}{Nikolaos Nikolaou},
  \bibinfo{person}{Narayanan Edakunni}, \bibinfo{person}{Meelis Kull},
  \bibinfo{person}{Peter Flach}, {and} \bibinfo{person}{Gavin Brown}.}
  \bibinfo{year}{2016}\natexlab{}.
\newblock \showarticletitle{Cost-Sensitive Boosting Algorithms: Do We Really
  Need Them?}
\newblock \bibinfo{journal}{\emph{Machine Learning}} \bibinfo{volume}{104},
  \bibinfo{number}{2--3} (\bibinfo{year}{2016}), \bibinfo{pages}{359--384}.
\newblock
\href{https://doi.org/10.1007/s10994-016-5572-x}{doi:\nolinkurl{10.1007/s10994-016-5572-x}}


\bibitem[Patel et~al\mbox{.}(2021)]%
        {Patel2021IMax}
\bibfield{author}{\bibinfo{person}{Kanil Patel}, \bibinfo{person}{William
  Beluch}, \bibinfo{person}{Bin Yang}, \bibinfo{person}{Michael Pfeiffer},
  {and} \bibinfo{person}{Dan Zhang}.} \bibinfo{year}{2021}\natexlab{}.
\newblock \showarticletitle{Multi-Class Uncertainty Calibration via Mutual
  Information Maximization-based Binning}. In
  \bibinfo{booktitle}{\emph{International Conference on Learning
  Representations (ICLR)}}.
\newblock


\bibitem[Platt(1999)]%
        {Platt1999}
\bibfield{author}{\bibinfo{person}{John~C. Platt}.}
  \bibinfo{year}{1999}\natexlab{}.
\newblock \showarticletitle{Probabilistic Outputs for Support Vector Machines
  and Comparisons to Regularized Likelihood Methods}.
\newblock In \bibinfo{booktitle}{\emph{Advances in Large Margin Classifiers}}.
  \bibinfo{publisher}{MIT Press}, \bibinfo{pages}{61--74}.
\newblock


\bibitem[Rahimi et~al\mbox{.}(2020)]%
        {Rahimi2020IntraOrder}
\bibfield{author}{\bibinfo{person}{Amir Rahimi}, \bibinfo{person}{Amirreza
  Shaban}, \bibinfo{person}{Ching-An Cheng}, \bibinfo{person}{Richard Hartley},
  {and} \bibinfo{person}{Byron Boots}.} \bibinfo{year}{2020}\natexlab{}.
\newblock \showarticletitle{Intra Order-preserving Functions for Calibration of
  Multi-Class Neural Networks}. In \bibinfo{booktitle}{\emph{Advances in Neural
  Information Processing Systems 33 (NeurIPS)}}.
\newblock


\bibitem[Saerens et~al\mbox{.}(2002)]%
        {Saerens2002}
\bibfield{author}{\bibinfo{person}{Marco Saerens}, \bibinfo{person}{Patrice
  Latinne}, {and} \bibinfo{person}{Christine Decaestecker}.}
  \bibinfo{year}{2002}\natexlab{}.
\newblock \showarticletitle{Adjusting the Outputs of a Classifier to New a
  Priori Probabilities: A Simple Procedure}.
\newblock \bibinfo{journal}{\emph{Neural Computation}} \bibinfo{volume}{14},
  \bibinfo{number}{1} (\bibinfo{year}{2002}), \bibinfo{pages}{21--41}.
\newblock
\href{https://doi.org/10.1162/089976602753284446}{doi:\nolinkurl{10.1162/089976602753284446}}


\bibitem[Sato(2024)]%
        {Sato2024TopN}
\bibfield{author}{\bibinfo{person}{Masahiro Sato}.}
  \bibinfo{year}{2024}\natexlab{}.
\newblock \showarticletitle{Calibrating the Predictions for Top-N
  Recommendations}. In \bibinfo{booktitle}{\emph{Proceedings of the 18th ACM
  Conference on Recommender Systems (RecSys)}}. \bibinfo{pages}{963--968}.
\newblock
\href{https://doi.org/10.1145/3640457.3688177}{doi:\nolinkurl{10.1145/3640457.3688177}}


\bibitem[Tae and Lee(2026)]%
        {Tae2026MRP}
\bibfield{author}{\bibinfo{person}{Inwoo Tae} {and} \bibinfo{person}{Yongjae
  Lee}.} \bibinfo{year}{2026}\natexlab{}.
\newblock \showarticletitle{Post-Calibration Reliability Reranking of Relevance
  Decisions via Label-wise Monotone Projection}. In
  \bibinfo{booktitle}{\emph{Proceedings of the 35th ACM International
  Conference on Information and Knowledge Management (CIKM)}}.
\newblock
\newblock
\shownote{arXiv:2608.10406}.


\bibitem[Thies et~al\mbox{.}(2026)]%
        {Thies2026LabelRankingCalib}
\bibfield{author}{\bibinfo{person}{Santo M. A.~R. Thies},
  \bibinfo{person}{Viktor Bengs}, \bibinfo{person}{Timo Kaufmann},
  \bibinfo{person}{Sebastian~J. Vollmer}, {and} \bibinfo{person}{Eyke
  H{\"u}llermeier}.} \bibinfo{year}{2026}\natexlab{}.
\newblock \bibinfo{title}{Calibrated Preference Learning: The Case of Label
  Ranking}.
\newblock \bibinfo{howpublished}{arXiv:2605.30447}.
\newblock


\bibitem[Tsoumakas et~al\mbox{.}(2011)]%
        {Tsoumakas2011MULAN}
\bibfield{author}{\bibinfo{person}{Grigorios Tsoumakas},
  \bibinfo{person}{Eleftherios Spyromitros-Xioufis}, \bibinfo{person}{Jozef
  Vilcek}, {and} \bibinfo{person}{Ioannis Vlahavas}.}
  \bibinfo{year}{2011}\natexlab{}.
\newblock \showarticletitle{MULAN: A Java Library for Multi-Label Learning}.
\newblock \bibinfo{journal}{\emph{Journal of Machine Learning Research}}
  \bibinfo{volume}{12} (\bibinfo{year}{2011}), \bibinfo{pages}{2411--2414}.
\newblock


\bibitem[Ullah et~al\mbox{.}(2025)]%
        {Ullah2025XMLCCalib}
\bibfield{author}{\bibinfo{person}{Nasib Ullah}, \bibinfo{person}{Erik
  Schultheis}, \bibinfo{person}{Jinbin Zhang}, {and} \bibinfo{person}{Rohit
  Babbar}.} \bibinfo{year}{2025}\natexlab{}.
\newblock \showarticletitle{How Well Calibrated are Extreme Multi-label
  Classifiers? An Empirical Analysis}. In \bibinfo{booktitle}{\emph{Proceedings
  of the 31st ACM SIGKDD Conference on Knowledge Discovery and Data Mining
  (KDD)}}. \bibinfo{pages}{1397--1408}.
\newblock
\href{https://doi.org/10.1145/3690624.3709333}{doi:\nolinkurl{10.1145/3690624.3709333}}


\bibitem[van~den Goorbergh et~al\mbox{.}(2022)]%
        {VandenGoorbergh2022}
\bibfield{author}{\bibinfo{person}{Ruben van~den Goorbergh},
  \bibinfo{person}{Maarten van Smeden}, \bibinfo{person}{Dirk Timmerman}, {and}
  \bibinfo{person}{Ben Van~Calster}.} \bibinfo{year}{2022}\natexlab{}.
\newblock \showarticletitle{The Harm of Class Imbalance Corrections for Risk
  Prediction Models: Illustration and Simulation Using Logistic Regression}.
\newblock \bibinfo{journal}{\emph{Journal of the American Medical Informatics
  Association}} \bibinfo{volume}{29}, \bibinfo{number}{9}
  (\bibinfo{year}{2022}), \bibinfo{pages}{1525--1534}.
\newblock
\href{https://doi.org/10.1093/jamia/ocac093}{doi:\nolinkurl{10.1093/jamia/ocac093}}


\bibitem[Wallace and Dahabreh(2012)]%
        {Wallace2012}
\bibfield{author}{\bibinfo{person}{Byron~C. Wallace} {and}
  \bibinfo{person}{Issa~J. Dahabreh}.} \bibinfo{year}{2012}\natexlab{}.
\newblock \showarticletitle{Class Probability Estimates are Unreliable for
  Imbalanced Data (and How to Fix Them)}. In \bibinfo{booktitle}{\emph{2012
  IEEE 12th International Conference on Data Mining (ICDM)}}.
  \bibinfo{pages}{695--704}.
\newblock
\href{https://doi.org/10.1109/ICDM.2012.115}{doi:\nolinkurl{10.1109/ICDM.2012.115}}


\bibitem[Wang et~al\mbox{.}(2026)]%
        {Wang2026BeyondLA}
\bibfield{author}{\bibinfo{person}{Zhanliang Wang}, \bibinfo{person}{Hongzhuo
  Chen}, \bibinfo{person}{Quan~Minh Nguyen}, \bibinfo{person}{Mian~Umair
  Ahsan}, {and} \bibinfo{person}{Kai Wang}.} \bibinfo{year}{2026}\natexlab{}.
\newblock \showarticletitle{Beyond Logit Adjustment: A Residual Decomposition
  Framework for Long-Tailed Reranking}. In \bibinfo{booktitle}{\emph{Third
  Conference on Language Modeling (COLM)}}.
\newblock
\newblock
\shownote{arXiv:2604.01506}.


\bibitem[Wydmuch et~al\mbox{.}(2018)]%
        {Wydmuch2018PLT}
\bibfield{author}{\bibinfo{person}{Marek Wydmuch}, \bibinfo{person}{Kalina
  Jasinska}, \bibinfo{person}{Mikhail Kuznetsov}, \bibinfo{person}{R{\'o}bert
  Busa-Fekete}, {and} \bibinfo{person}{Krzysztof Dembczy{\'n}ski}.}
  \bibinfo{year}{2018}\natexlab{}.
\newblock \showarticletitle{A No-Regret Generalization of Hierarchical Softmax
  to Extreme Multi-Label Classification}. In \bibinfo{booktitle}{\emph{Advances
  in Neural Information Processing Systems 31 (NeurIPS)}}.
\newblock


\bibitem[{XGBoost developers}(2026)]%
        {XGBoostDocs}
\bibfield{author}{\bibinfo{person}{{XGBoost developers}}.}
  \bibinfo{year}{2026}\natexlab{}.
\newblock \bibinfo{title}{{XGBoost} Parameters and Notes on Parameter Tuning}.
\newblock
  \bibinfo{howpublished}{\url{https://xgboost.readthedocs.io/en/stable/parameter.html}
  and
  \url{https://xgboost.readthedocs.io/en/stable/tutorials/param_tuning.html}}.
\newblock
\newblock
\shownote{Accessed 2026-09-07}.


\bibitem[Yu et~al\mbox{.}(2016)]%
        {Yu2016DREAM}
\bibfield{author}{\bibinfo{person}{Feng Yu}, \bibinfo{person}{Qiang Liu},
  \bibinfo{person}{Shu Wu}, \bibinfo{person}{Liang Wang}, {and}
  \bibinfo{person}{Tieniu Tan}.} \bibinfo{year}{2016}\natexlab{}.
\newblock \showarticletitle{A Dynamic Recurrent Model for Next Basket
  Recommendation}. In \bibinfo{booktitle}{\emph{Proceedings of the 39th
  International ACM SIGIR Conference on Research and Development in Information
  Retrieval (SIGIR '16)}}. \bibinfo{publisher}{ACM}, \bibinfo{address}{Pisa,
  Italy}, \bibinfo{pages}{729--732}.
\newblock
\href{https://doi.org/10.1145/2911451.2914683}{doi:\nolinkurl{10.1145/2911451.2914683}}


\bibitem[Zadrozny and Elkan(2002)]%
        {Zadrozny2002}
\bibfield{author}{\bibinfo{person}{Bianca Zadrozny} {and}
  \bibinfo{person}{Charles Elkan}.} \bibinfo{year}{2002}\natexlab{}.
\newblock \showarticletitle{Transforming Classifier Scores into Accurate
  Multiclass Probability Estimates}. In \bibinfo{booktitle}{\emph{Proceedings
  of the Eighth ACM SIGKDD International Conference on Knowledge Discovery and
  Data Mining (KDD)}}. \bibinfo{pages}{694--699}.
\newblock
\href{https://doi.org/10.1145/775047.775151}{doi:\nolinkurl{10.1145/775047.775151}}


\end{thebibliography}

\end{document}